\newtheorem{theorem}{Theorem}[section]
\newtheorem{corollary}[theorem]{Corollary}
\newtheorem{proposition}[theorem]{Proposition}
\newtheorem{lemma}[theorem]{Lemma}
\theoremstyle{definition}
\newtheorem{definition}[theorem]{Definition}
\newtheorem*{claim}{Claim}
\newcommand{\N}{\mathbb{N}}
\newcommand{\bA}{\mathbf{A}}
\newcommand{\bB}{\mathbf{B}}
\newcommand{\bD}{\mathbf{D}}
\newcommand{\bzero}{\mathbf{0}}
\newcommand{\bone}{\mathbf{1}}
\newcommand{\R}{\mathbb{R}}
\newcommand{\bw}{\mathbf{w}}
\newcommand{\bv}{\mathbf{v}}
\newcommand{\bx}{\mathbf{x}}
\newcommand{\by}{\mathbf{y}}
\newcommand{\bz}{\mathbf{z}}
\newcommand{\bH}{\mathbf{H}}
\newcommand{\bW}{\mathbf{W}}
\newcommand{\bC}{\mathbf{C}}
\newcommand{\cprop}{c^{\text{PROP}}}
\newcommand{\cef}{c^{\text{EF}}}
\newcommand{\ccdiv}{c^{\text{CD}}_k}
\newcommand{\cpropemph}{c^{\emph{PROP}}}
\newcommand{\cefemph}{c^{\emph{EF}}}
\newcommand{\ccdivemph}{c^{\emph{CD}}_k}
\newcommand{\setsplit}{\textsc{Max-2-2-Set-Splitting}}
\newcommand{\eps}{\varepsilon}
\DeclareMathOperator{\poly}{poly}
\DeclareMathOperator{\exdisc}{disc^{max}}
\DeclareMathOperator{\disc}{disc}
\DeclareMathOperator{\wdisc}{wdisc}
\newcommand{\exwdisc}{\operatorname{wdisc}^{\operatorname{max}}}
\title{Almost Envy-Freeness for Groups: \\ Improved Bounds via Discrepancy Theory}
\author{
Pasin Manurangsi\\Google Research
\and
Warut Suksompong\\National University of Singapore
}
\date{\vspace{-5ex}}
\begin{document}

\maketitle

\begin{abstract}
We study the allocation of indivisible goods among groups of agents using well-known fairness notions such as envy-freeness and proportionality. While these notions cannot always be satisfied, we provide several bounds on the optimal relaxations that can be guaranteed. For instance, our bounds imply that when the number of groups is constant and the $n$ agents are divided into groups arbitrarily, there exists an allocation that is envy-free up to $\Theta(\sqrt{n})$ goods, and this bound is tight. Moreover, we show that while such an allocation can be found efficiently, it is NP-hard to compute an allocation that is envy-free up to $o(\sqrt{n})$ goods even when a fully envy-free allocation exists. Our proofs make extensive use of tools from discrepancy theory.
\end{abstract}  

\section{Introduction}

Resource allocation problems arise in numerous facets of modern society, from allotting supplies to neighborhoods in a city to distributing personnel among governmental organizations.
A principal consideration when allocating resources is \emph{fairness}: the society is better off when all parties involved feel that they receive a fair share of the resource.
It therefore comes as no surprise that the study of how to allocate resources fairly---commonly referred to as \emph{fair division}---has received substantial attention in economics and, as societies become more interconnected and applications grow in scale, in computer science \citep{BramsTa96,Moulin03,Moulin19,Thomson16}

The vast majority of the fair division literature assumes that each involved party consists of a single agent.
Yet, in many resource allocation scenarios, especially large-scale ones, resources are allocated to \emph{groups} of agents---even though the agents in each group share the same set of goods, they may have varying preferences over different goods in the set.
Indeed, some citizens of a neighborhood may benefit from new books allotted to the public library, while others would rather have additional fitness equipment in their local park.
Similarly, members of an organization may have diverse opinions about the new personnel that they would like to have in their organization.
These scenarios cannot be captured by the traditional fair division setting, in which each recipient of a bundle of goods is represented by a single preference.

The group aspect of fair division has been addressed in a number of recent papers \citep{ManurangsiSu17,GhodsiLaMo18,Suksompong18,Suksompong18-2,SegalhaleviNi19,SegalhaleviSu19,SegalhaleviSu21,SegalhaleviSu22,KyropoulouSuVo20}.
Most of these papers studied the important fairness notion of \emph{envy-freeness}: an agent is said to be \emph{envy-free} if she values the goods allocated to her group at least as much as those allocated to any other group.
When goods are discrete---books, personnel, fitness equipment, and many other common supplies fall into this category---envy-freeness cannot always be satisfied even when allocating the goods among individual agents; indeed, this can be easily seen when there is a single valuable good and at least two agents.
This observation has motivated relaxing the envy-freeness criterion to \emph{envy-freeness up to $c$ goods (EF$c$)}, which means that any agent's envy toward another group can be eliminated by removing at most $c$ goods from that group's bundle, where $c\ge 1$ is an integer parameter.

When allocating goods among individual agents, an EF1 allocation can be found regardless of the number of agents \citep{LiptonMaMo04}.
However, the picture for group allocation is much less clear, even in the simplest case of two groups.
\citet{SegalhaleviSu19} showed that if the two groups contain $n$ agents in total and the agents have additive valuations, then an EF$n$ allocation is guaranteed to exist.
Their result follows by applying a classic theorem on \emph{consensus halving}, i.e., a partition of a set of \emph{divisible} goods into two parts such that every agent values both parts equally.
Since there is always a consensus halving in which at most $n$ goods are divided \citep{Alon87,SimmonsSu03}, rounding such a consensus halving yields an EF$n$ allocation.\footnote{In fact, \citet{SegalhaleviSu19} gave a slightly better guarantee of EF$(n-1)$. This guarantee was obtained by finding a consensus halving for $n-1$ of the agents, and letting the remaining agent choose the part that she prefers. }
On the other hand, \citet[Prop.~3.5]{KyropoulouSuVo20} gave a simple example showing that it is impossible to ensure EF$c$ for $c\in o(\log n)$, thereby leaving an exponential gap in this fundamental question.
Can we always achieve an impressive fairness guarantee of mere logarithmic envy, or does the envy scale linearly with the number of agents in the worst case?

\subsection{Our Results}

In this paper, we give a precise answer to the above question, and much more.
We consider a general setting with $n=n_1+\dots+n_k$ agents distributed into $k\ge 2$ groups consisting of $n_1,\dots,n_k\ge 1$ agents, respectively.
As is common in fair division, we assume that the agents have additive utilities over the goods.
Besides EF$c$, we investigate relaxations of two other important fairness notions: \emph{proportionality}---every agent believes that the share allocated to her group is worth at least $1/k$ of the entire set of goods---and \emph{consensus $1/k$-division}---each agent finds all $k$ bundles to be of equal value.\footnote{When $k=2$, consensus $1/k$-division is better known as \emph{consensus halving} \citep{SimmonsSu03}.}
The precise definitions can be found in \textbf{\Cref{sec:fairness-notions}}.
We remark here that even though the definition of consensus $1/k$-division does not involve groups, the notion is natural and has been studied in several papers (see \Cref{sec:related}), and will moreover be useful as an intermediate notion through which we can obtain results for the other notions of interest. 

For each fairness notion and each $n_1,\dots,n_k$, we are interested in the smallest positive value $c$ such that an allocation satisfying that notion up to $c$ goods always exists for agents with arbitrary additive utilities.
For envy-freeness and proportionality, we denote this value of $c$ by $\cef(n_1, \dots, n_k)$ and $\cprop(n_1, \dots, n_k)$, respectively.
On the other hand, for consensus $1/k$-division, the partition of agents into groups is inconsequential, so we use the notation $\ccdiv(n)$.
Our main results provide bounds on these values:

\begin{theorem} \label{thm:main-ef}
For any $k, n_1, \dots, n_k \in \N$, 
\begin{align*}
O(\sqrt{n}) 
&\geq \cefemph(n_1, \dots, n_k) \geq \Omega(\sqrt{\max\{n_1, \dots, n_k\}/k^3}).    
\end{align*}
\end{theorem}

\begin{theorem} \label{thm:main-prop}
For any $k, n_1, \dots, n_k \in \N$, 
\begin{align*}
O(\sqrt{n}) 
&\geq \cpropemph(n_1, \dots, n_k) \geq \Omega(\sqrt{\max\{n_1, \dots, n_k\}/k^3}).    
\end{align*}
\end{theorem}

\begin{theorem} \label{thm:main-cd}
For any $n, k \in \N$, 
\[O(\sqrt{n}) \geq \ccdivemph(n) \geq \Omega(\sqrt{n/k}).\]
\end{theorem}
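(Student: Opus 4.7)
Both directions reduce to discrepancy statements after normalizing each agent's valuation so that every good has value in $[0,1]$ (which does not change $\ccdiv(n)$).

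\textbf{Upper bound.} View the input as an $n \times m$ matrix $\bA$ with $A_{ig} = v_i(g) \in [0,1]$. A partition $(B_1, \ldots, B_k)$ is a consensus $1/k$-division up to $c$ goods provided that, for every agent $i$ and every pair $j, j'$, the signed-sum discrepancy $|v_i(B_j) - v_i(B_{j'})| \leq c$, because each good is worth at most $1$ and the gap can then be closed by moving at most $c$ goods. I would obtain such a partition with $c = O(\sqrt{n})$ by a $k$-color discrepancy argument: iteratively bisect the current block of goods using Spencer's six-standard-deviations theorem (or its Beck partial-coloring refinement) $O(\log k)$ times, and then merge the resulting pieces into $k$ bundles. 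A naive bisection carries extra factors of $\sqrt{\log(m/n)}$ and $\sqrt{\log k}$; removing them requires first bucketing goods with similar value profiles so that $m \leq \poly(n)$, and then distributing the discrepancy budget geometrically across the bisection levels.

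\textbf{Lower bound.} Construct an explicit instance from a Hadamard matrix $\bH \in \{\pm 1\}^{n \times n}$ with $\bH \bH^\top = n I_n$; after a minor surgery (e.g., dropping the all-ones column of a Sylvester Hadamard, or passing to a random $\pm 1$ matrix whose orthogonality properties hold with high probability), arrange further that every column of $\bH$ has sum zero. Set $v_i(g_j) = (1 + H_{ij})/2 \in \{0, 1\}$, so $v_i(M) = n/2$ for every $i$. For any $k$-partition with indicator vectors $\bx_1, \ldots, \bx_k$,
\[
v_i(B_j) - v_i(M)/k \;=\; \tfrac{1}{2}(|B_j| - n/k) \;+\; \tfrac{1}{2}(\bH \bx_j)_i,
\]
and when squared and summed over $(i,j)$ the cross term vanishes because $\sum_i (\bH \bx_j)_i = 0$. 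Using $\|\bH \bx_j\|^2 = n |B_j|$ and $\sum_j |B_j| = n$,
\[
\sum_{i, j} \bigl(v_i(B_j) - v_i(M)/k\bigr)^2 \;\geq\; \tfrac{1}{4} \sum_j \|\bH \bx_j\|^2 \;=\; \tfrac{n^2}{4}.
\]
Averaging over the $nk$ pairs, some $(i^*, j^*)$ satisfies $|v_{i^*}(B_{j^*}) - v_{i^*}(M)/k| \geq \tfrac{1}{2}\sqrt{n/k}$, so agent $i^*$'s bundle values span a range of $\Omega(\sqrt{n/k})$. Since each good is worth $0$ or $1$, closing this gap requires moving $\Omega(\sqrt{n/k})$ goods, establishing $\ccdiv(n) \geq \Omega(\sqrt{n/k})$.

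\textbf{Main obstacle.} The upper bound is the delicate direction: achieving $O(\sqrt{n})$ uniformly in both $m$ and $k$ requires absorbing the logarithmic overheads of off-the-shelf Spencer-type theorems through preprocessing (bucketing similar goods) and a carefully budgeted bisection schedule. The lower bound, by contrast, is a clean second-moment computation once the Hadamard-like matrix is in place, with only minor technical care when $n$ is not a power of two.
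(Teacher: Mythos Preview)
Your lower bound is essentially the paper's argument presented more directly: the paper reduces to the known bound $\exdisc(n,k)\ge\Omega(\sqrt{n/k})$ of Doerr--Srivastav, which is itself proved via the same Hadamard second-moment computation you wrote out. So that direction is fine (the column-sum-zero wrinkle is handled, as you note, by discarding the all-ones row/column and absorbing the off-by-one into the asymptotics).

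The upper bound, however, has a genuine gap, and it is not the one you flagged. Your key claim is that if $|v_i(B_j)-v_i(B_{j'})|\le c$ for all $i,j,j'$ and each good is worth at most $1$, then the partition is a consensus $1/k$-division up to $c$ goods. This is false. To certify ``up to $c$ goods'' for the pair $(j',j)$ you must find $B\subseteq B_{j'}$ with $|B|\le c$ and $v_i(B)\ge v_i(B_{j'})-v_i(B_j)$; the bound $v_i(g)\le 1$ caps how much value removing $c$ goods \emph{can} destroy, but you need a \emph{lower} bound on the value of the goods you remove. If every good in $B_{j'}$ has value $\eps$ to agent $i$, closing a gap of size $c$ requires removing $c/\eps$ goods, not $c$. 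Concretely: one agent, goods $g_1,\dots,g_m$ each worth $1/m$, $B_{j'}=G$, $B_j=\emptyset$; the utility gap is $1$ yet you must remove all $m$ goods.

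This is precisely the obstacle the paper identifies and overcomes. Their fix is to run discrepancy on a $2n\times m$ matrix: for each agent, one row is the (scaled) utility vector restricted to her ``small'' goods, and a second row is the indicator of her top $\Theta(Tk)$ ``large'' goods (where $T=\lceil\exdisc(2n,k)\rceil$). The small-goods row balances utilities as you intended; the large-goods row guarantees each bundle contains $\Theta(T)$ goods of value at least some threshold $p^j$, which is exactly what lets you remove $O(T)$ goods and actually close the gap. Your bucketing/bisection discussion addresses a non-issue (the $O(\sqrt{n})$ multi-color bound is already available without log factors via Doerr--Srivastav) while missing this translation step from utility discrepancy to ``up to $c$ goods''.
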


Note that since $\max\{n_1, \dots, n_k\} \ge n/k$, all three bounds are asymptotically tight when $k$ is constant.
In particular, \Cref{thm:main-ef} answers the question that we posed earlier: taking $k=2$, we find that the optimal envy-freeness guarantee for two groups is EF$c$ where $c\in\Theta(\sqrt{n})$.
This significantly improves upon the lower bound of $\Omega(\log n)$ and upper bound of $O(n)$ from prior work, and implies that a decent, though not outstanding, fairness guarantee can be obtained.
We establish \Cref{thm:main-cd} along with the upper bounds of \Cref{thm:main-ef,thm:main-prop} in \textbf{\Cref{sec:approx-multi}}, and the lower bounds of \Cref{thm:main-ef,thm:main-prop} in \textbf{\Cref{sec:lower-weighted}}.

Our main tools and techniques throughout this work come from \emph{discrepancy theory}, an area of mathematics that studies how much deviation from the desired state is necessary in various settings---we provide the relevant background in \textbf{\Cref{sec:discrepancy}}.
The tools that we use imply that for each of the fairness notions, an allocation satisfying the corresponding upper bound in Theorems~\ref{thm:main-ef}--\ref{thm:main-cd} can be found efficiently.
In light of this, a natural question is whether we can compute an allocation improving upon these bounds \emph{if} such allocations are known to exist in a given instance.
In \textbf{\Cref{sec:complexity}}, we provide a strong negative answer to this question: for example, we show that even if a fully envy-free allocation is known to exist for a certain instance, it is still NP-hard to find an allocation that is envy-free up to $o(\sqrt{n})$ goods for that instance.

\subsection{Further Related Work}
\label{sec:related}

While fair division has a long and storied history, several fairness notions for the indivisible goods setting, including envy-freeness relaxations, have only been proposed and studied in the past few years \citep{BouveretChMa16,Markakis17,Suksompong21,AmanatidisBiFi22,AzizLiMo22}.
In the group setting, \citet{KyropoulouSuVo20} showed that EF1 can be guaranteed for all agents only when the groups are small---for instance, with two groups, an EF1 allocation does not always exist when both groups have size at least three.
\citet{SegalhaleviSu19} investigated \emph{democratic fairness}, where the goal is to satisfy a certain fraction of the agents in each group.
They showed that for two groups with any number of agents, there exists an allocation that is EF1 for at least half of the agents in each group---this ratio is tight in the worst case, and continues to be tight even if we relax EF1 to EF$c$ for any constant $c$.

Besides the model that we consider, a number of papers have studied related models and notions.
\citet{GhodsiLaMo18} addressed \emph{rent division} among groups, where in addition to deciding the allocation of the rooms, the agents must determine how to split the rent of their apartment.
\citet{BenabbouChEl19} examined a group setting where the goods allocated to each group are further divided among the members of the group, so in contrast to our setting, each agent does not derive full utility from the bundle of her group.
Several authors studied individual resource allocation using fairness notions relating different groups of agents, for example notions aiming to minimize envy that arises between groups \citep{Berliant92,Husseinov11,TodoLiHu11,AleksandrovWa18,ConitzerFrSh19,AzizRe20}. 

Like fair division in general, consensus $1/k$-division and consensus halving have been studied by mathematicians and economists for several decades \citep{HobbyRi65,Alon87,SimmonsSu03}, and attracted recent interest from computer scientists in light of new computational complexity results \citep{FilosratsikasGo18,FilosratsikasGo22,FilosratsikasFrGo18,FilosratsikasHoSo20,DeligkasFeMe21,DeligkasFiHo21,DeligkasFeHo22,GoldbergHoIg22}.
In particular, \citet{FilosratsikasGo18} proved that approximate consensus halving of a one-dimensional heterogeneous divisible resource is PPA-complete---this constituted the first PPA-completeness result for a problem that is ``natural'' in the sense that its description does not involve a polynomial-sized circuit.

\section{Preliminaries}

Let $G=[m]$ be the set of goods, where $[r] := \{1,2,\dots,r\}$ for any positive integer $r$.
There are $n = n_1+\dots+n_k$ agents divided into $k\ge 2$ groups, where group~$i$ contains $n_i\ge 1$ agents.
Denote by $a^{(i, j)}$ the $j$th agent in group~$i$. 
The utility of $a^{(i, j)}$ for good~$\ell$ is given by $u^{(i, j)}(\ell)$.
We assume that the agents' utilities are additive, that is, $u^{(i, j)}(G') = \sum_{\ell\in G'}u^{(i, j)}(\ell)$ for every $G'\subseteq G$.
An \emph{allocation} $(A_1,\dots,A_k)$ is an ordered partition of the goods into $k$ bundles, where bundle $A_i$ is allocated to group~$i$.
In particular, $A_1\cup\dots\cup A_k = G$ and $A_i\cap A_j = \emptyset$ for $i\ne j$.

\subsection{Fairness Notions}
\label{sec:fairness-notions}

We are interested in the following fairness notions:

\begin{definition}
Let $c$ be a nonnegative integer.
An allocation $(A_1,\dots,A_k)$ is said to be 
\begin{itemize}
\item \emph{envy-free up to $c$ goods (EF$c$)} if, for every agent $a^{(i, j)}$ and every $i'\ne i$, there exists a set $B\subseteq A_{i'}$ with $|B|\le c$ such that $u^{(i, j)}(A_i) \ge u^{(i, j)}(A_{i'}\setminus B)$.
\item \emph{proportional up to $c$ goods (PROP$c$)} if, for every agent $a^{(i, j)}$, there exists a set $B\subseteq G\setminus A_i$ with $|B|\le c$ such that $u^{(i, j)}(A_i) \ge u^{(i, j)}(G)/k - u^{(i, j)}(B)$.
\item a \emph{consensus $1/k$-division up to $c$ goods} if, for every agent $a$ and every pair of bundles $A_i,A_{i'}$, there exists $B\subseteq A_{i'}$ with $|B|\le c$ such that $a$ values $A_i$ no less than $A_{i'} \setminus B$.
\end{itemize}
\end{definition}

Note that unlike the first two notions, the third notion does not depend on how the agents are distributed across groups.
EF$c$ has been studied in many fair division papers, PROP$c$ has also been considered in a number of papers \citep{BarmanKr19,AzizMoSa20,ChakrabortyScSu21}, whereas consensus $1/k$-division up to $c$ goods is proposed for the first time in this paper, to the best of our knowledge.

For each $k,n_1,\dots,n_k$, let $\cef(n_1, \dots, n_k)$ (resp., $\cprop(n_1, \dots, n_k)$) denote the smallest value of $c$ such that an EF$c$ (resp., PROP$c$) allocation is guaranteed to exist for agents with additive utilities.
Similarly, let $\ccdiv(n)$ denote the analogous value for consensus $1/k$-division up to $c$ goods when there are $n$ agents and $k$ bundles.
We have the following relations between these values.

\begin{proposition}
\label{prop:fairness-relation}
For any $k,n',n_1,\dots,n_k\in\mathbb{N}$, we have
\begin{enumerate}[(a)]
\item $\cefemph(n_1, \dots, n_k) \leq \ccdivemph(n_1 + \cdots + n_k)$;
\item $\cpropemph(n_1, \dots, n_k) \leq \cefemph(n_1, \dots, n_k)$;
\item $\cefemph(n', \dots, n') \geq \ccdivemph(n')$, where there are $k$ copies of $n'$ on the left-hand side.
\end{enumerate}
\end{proposition}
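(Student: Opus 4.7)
The plan is to handle the three inequalities in turn; parts (a) and (c) are essentially structural, and the only technical step is the averaging argument in part (b).

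For (a), I will observe that the definition of a consensus $1/k$-division up to $c$ goods makes no reference to which group each agent belongs to: it requires, for \emph{every} agent $a$ and \emph{every} pair of bundles $A_i, A_{i'}$, a set $B \subseteq A_{i'}$ of size at most $c$ whose removal makes $a$ weakly prefer $A_i$ to $A_{i'}\setminus B$. Specializing this requirement to pairs in which $i$ happens to be the group of $a$ recovers exactly EF$c$. Hence any allocation that is a consensus $1/k$-division up to $c$ goods is automatically EF$c$ for \emph{every} assignment of the $n$ agents into groups of sizes $n_1,\dots,n_k$, giving (a).

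For (b), I will fix an EF$c$ allocation $(A_1,\dots,A_k)$ and show it is also PROP$c$. Given an agent $a^{(i,j)}$, let $i^\star \in \arg\max_{i' \ne i} u^{(i,j)}(A_{i'})$ and pick $B \subseteq A_{i^\star}$ with $|B|\le c$ witnessing EF$c$ toward $A_{i^\star}$. Because $u^{(i,j)}(A_{i^\star})$ is at least the average of $u^{(i,j)}(A_{i'})$ over $i' \ne i$, we have $u^{(i,j)}(A_{i^\star}) \ge (u^{(i,j)}(G) - u^{(i,j)}(A_i))/(k-1)$. Combining this with the EF$c$ inequality $u^{(i,j)}(A_i) \ge u^{(i,j)}(A_{i^\star}) - u^{(i,j)}(B)$ and clearing denominators produces
\[
u^{(i,j)}(A_i) \;\ge\; \frac{u^{(i,j)}(G)}{k} \;-\; \frac{k-1}{k}\,u^{(i,j)}(B) \;\ge\; \frac{u^{(i,j)}(G)}{k} \;-\; u^{(i,j)}(B),
\]
which certifies PROP$c$ using the \emph{single} witness set $B$.

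For (c), I will build, from an arbitrary consensus $1/k$-division instance on $n'$ agents with additive utilities, an EF instance with $k$ groups of size $n'$ by placing an identical copy of each of the $n'$ original agents into every one of the $k$ groups. Any EF$c$ allocation in this lifted instance, applied to the $k$ copies of a fixed original agent, supplies for every ordered pair of bundles the required set of at most $c$ goods, which is exactly consensus $1/k$-division up to $c$ goods in the original instance; so any $c$ sufficient for the left-hand side of (c) suffices for the right-hand side.

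The only mildly subtle point is the averaging step in (b): the naive idea of taking $B = \bigcup_{i' \ne i} B_{i'}$ would yield a witness of size up to $c(k-1)$, far too large. The key is to isolate the single worst competitor $i^\star$ and exploit the $(k-1)/k$ slack produced by averaging over the remaining bundles, so that a single witness of size $c$ is enough.
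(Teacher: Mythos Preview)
Your proof is correct. Parts (a) and (c) are essentially identical to the paper's arguments.

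For part (b), your route differs from the paper's. The paper takes, for each $i'\ne i$, a witness $B_{i'}$ from EF$c$, then replaces all of them by a single set $B$ consisting of the $c$ goods outside $A_i$ that $a^{(i,j)}$ values most, noting $u^{(i,j)}(B)\ge u^{(i,j)}(B_{i'})$; averaging $u^{(i,j)}(A_i)\ge u^{(i,j)}(A_{i'}\setminus B_{i'})$ over all $i'$ (with $B_i=\emptyset$) then gives PROP$c$ directly. You instead single out the most valuable competing bundle $A_{i^\star}$, use only its EF$c$ witness $B$, and combine the max-versus-average inequality $u^{(i,j)}(A_{i^\star})\ge (u^{(i,j)}(G)-u^{(i,j)}(A_i))/(k-1)$ with $u^{(i,j)}(A_i)\ge u^{(i,j)}(A_{i^\star})-u^{(i,j)}(B)$. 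Your argument is marginally leaner (one EF$c$ witness rather than $k-1$, and no need to construct a separate ``top-$c$'' set), and it even yields the slightly stronger intermediate bound $u^{(i,j)}(A_i)\ge u^{(i,j)}(G)/k-\tfrac{k-1}{k}\,u^{(i,j)}(B)$. The paper's version has the minor aesthetic advantage that its witness $B$ is canonical (independent of the EF$c$ witnesses). Both arguments implicitly use nonnegativity of utilities, which is standard for goods.
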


\begin{proof}
We prove the three relations in turn.

\begin{enumerate}[(a)]
\item This follows immediately from the observation that a consensus $1/k$-division up to $c$ goods for $n_1+\dots+n_k$ agents is also envy-free up to $c$ goods for these agents regardless of how the agents are distributed into groups.

\item It suffices to show that for any $c$, every EF$c$ allocation is also PROP$c$.
Let $(A_1, \dots, A_k)$ be an EF$c$ allocation, and consider agent~$a^{(i, j)}$.
By definition of EF$c$, for each $i'\ne i$, there exists $B_{i'}\subseteq A_{i'}$ with $|B_{i'}|\le c$ such that $u^{(i, j)}(A_i) \geq u^{(i, j)}(A_{i'} \setminus B_{i'})$. 

Let $B$ denote the set of the $c$ most valuable goods for $a^{(i, j)}$ outside of $A_i$, breaking ties arbitrarily.
We have $u^{(i, j)}(B) \geq u^{(i, j)}(B_{i'})$ for all $i' \ne i$. 
Letting $B_i = \emptyset$, we have
\begin{align*}
u^{(i, j)}(A_i) &\geq \frac{1}{k} \sum_{i' \in [k]}u^{(i, j)}(A_{i'} \setminus B_{i'}) \\
&\geq \frac{1}{k} \sum_{i' \in [k]} \left(u^{(i, j)}(A_{i'}) - u^{(i, j)}(B)\right) = \frac{u^{(i, j)}(G)}{k} - u^{(i, j)}(B).
\end{align*}
Hence, the allocation is PROP$c$, as desired.

\item Given $n'$ agents, we make a copy of each agent in each of the $k$ groups.
Any EF$c$ allocation with respect to these groups is also a consensus $1/k$-division for the original agents.
The conclusion follows. \qedhere
\end{enumerate}
\end{proof}

\subsection{Discrepancy Theory}
\label{sec:discrepancy}

In this section, we outline the tools from discrepancy theory that we will use in this work.
Intuitively, a basic connection between discrepancy theory and our group fair division setting is the following:
Discrepancy theory considers a setting where there is a collection of subsets (also known as a \emph{set system}) and we want to color the elements of the ground set in two colors so that each subset contains roughly the same number of elements of each color.
The elements of the ground set correspond to the goods in our setting, while each subset represents an agent and its elements correspond to the goods that the agent values.
The goal of discrepancy theory is therefore similar to that of dividing the goods into two sets so that each agent values the two sets almost equally.

In this work, we view a vector $\bv \in \R^k$ also as a column matrix $\bv \in \R^{k \times 1}$. 
We use $\bone_k$ to denote the $k$-dimensional all-$1$ vector; when the dimension is clear from context, we may drop the subscript and simply write $\bone$. Furthermore, for a set $S \subseteq [k]$, we write $\bone(S) \in \{0, 1\}^k$ to denote the indicator vector of $S$, i.e., $(\bone(S))_i = 1$ if and only if $i \in S$.

For every $p \in [1, \infty)$, we use $\|\bv\|_p$ to denote the $\ell_p$ norm of $\bv$, defined by $\left(\sum_{i \in [k]} |\bv_i|^p\right)^{1/p}$. The $\ell_\infty$ norm of $\bv$, denoted by $\|\bv\|_{\infty}$, is $\max_{i \in [k]} |\bv_i|$. 
Further, denote by $\|\bv\|_0$ the number of nonzero coordinates of $\bv$.

\subsubsection{2-Color Discrepancy}

As mentioned earlier, a classic scenario in discrepancy theory is when there is a set system and the goal is to color the elements in two colors in such a way that each subset contains roughly the same number of elements of each color. 
As \cite{LovaszSpVe86} noted, this notion generalizes naturally to any matrix. Specifically, the \emph{discrepancy} of a matrix $\bA \in \R^{n \times m}$ is defined as
\begin{align*}
\disc(\bA) := \min_{\bx \in \{0, 1\}^m} \|\bA(0.5 \cdot \bone - \bx)\|_{\infty}.
\end{align*}
Here $\bx$ can be thought of as a $2$-coloring and the quantity $\|\bA(0.5 \cdot \bone - \bx)\|_{\infty}$ measures how ``unbalanced'' it is.

Let
\begin{align*}
\exdisc(n) := \sup_{m \in \N} \sup_{\bA \in [0, 1]^{n \times m}} \disc(\bA).
\end{align*}

A priori, $\exdisc(n)$ might even be infinite since we allow $\bA$ to have an arbitrary number of columns. Remarkably, however, it is known\footnote{We note here that some of the bounds we refer to are stated only for $0$-$1$ matrices $\bA$. However, one can check that they also hold for any matrix $\bA \in [0, 1]^{n \times m}$. See also Appendix~\ref{app:discrepancy-algo} where we sketch how the constructive versions of these bounds can be derived.}\textsuperscript{,}\footnote{See also~\citep{Spencer85} on which this bound is based.} that $\exdisc(n)$ is bounded by $O(\sqrt{n})$: 

\begin{lemma}[{\cite[Corollary 12.3.4]{AlonSp00}}] \label{lem:disc-ub}
For any $n \in \N$, $\exdisc(n) \leq O(\sqrt{n})$.
\end{lemma}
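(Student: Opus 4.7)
The plan is to prove the bound via Spencer's classical six-standard-deviations theorem, combining a linear-algebraic reduction with the entropy-based partial coloring method. The overall structure is: first reduce to the square case $m \leq n$ for free, then apply Spencer's theorem (which is itself proved by iterating a partial coloring lemma), and finally note that the argument only uses a column $\ell_2$-norm bound and so extends from $\{0,1\}$ to $[0,1]$ entries.

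\emph{Reduction to $m \leq n$.} I would start from the fractional point $\bx := 0.5 \cdot \bone$. Whenever more than $n$ coordinates of $\bx$ are still fractional, the corresponding columns of $\bA$ lie in $\R^n$ and must be linearly dependent, so there is a nonzero $\bv \in \R^m$ supported on the fractional coordinates with $\bA \bv = \bzero$. Slide $\bx$ along $\pm \bv$ (suitably rescaled) until one more coordinate hits $\{0, 1\}$; since $\bA \bv = \bzero$, the image $\bA \bx$ is unchanged. Iterating, we may assume $m \leq n$ at no cost to the discrepancy.

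\emph{Partial coloring and iteration.} Next I would invoke Spencer's partial coloring lemma: for $\bA \in [0,1]^{n \times n}$ there exists $\by \in \{-1, 0, +1\}^n$ with at least $n/2$ nonzero coordinates and $\|\bA \by\|_\infty \leq C\sqrt{n}$ for an absolute constant $C$. The standard proof uses pigeonhole in a discretization: partition the codomain $\R^n$ into axis-aligned cells of side length $\Theta(\sqrt{n})$; a Chernoff / Gaussian tail bound on $\bA \chi$ for a uniformly random $\chi \in \{-1, +1\}^n$ shows that the sign vectors concentrate on strictly fewer than $2^{n/2}$ cells; pigeonhole then yields $\chi \neq \chi'$ in the same cell with $\|\chi - \chi'\|_0 \geq n/2$, and $\by := (\chi - \chi')/2$ satisfies the claim. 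One then iterates the lemma on the not-yet-rounded coordinates; with row weights tuned so that rows whose running partial sum is small impose a weaker constraint in later rounds, the successive discrepancy contributions form a geometric series summing to $O(\sqrt{n})$, producing a full coloring $\bx \in \{0,1\}^m$ with $\|\bA(0.5 \cdot \bone - \bx)\|_\infty \leq O(\sqrt{n})$ as required.

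\emph{Extension to $[0,1]$ entries.} Throughout, the only structural property used is the column $\ell_2$-norm bound $\|\bA_{*,j}\|_2 \leq \sqrt{n}$, which holds equally for $\bA \in [0,1]^{n \times m}$ as for $\bA \in \{0,1\}^{n \times m}$, so the argument transfers verbatim and justifies the footnote in the excerpt. The main obstacle is the tight iteration: a naive reapplication of the partial coloring lemma accumulates a $\sqrt{\log n}$ factor and yields only $O(\sqrt{n \log n})$. To reach $O(\sqrt{n})$ one must carefully track which constraints remain ``active'' in later rounds, inflating the tolerance on rows whose running sum is already close to zero; this weighted refinement is the delicate technical core of Spencer's original argument.
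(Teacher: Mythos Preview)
The paper does not prove this lemma; it is quoted directly from \cite{AlonSp00} (and \cite{Spencer85}) as a black box, so there is no ``paper's own proof'' to compare against. Your sketch is essentially the standard Alon--Spencer/Spencer argument that the citation points to, and at that level it is on the right track.

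That said, there is one genuine gap in the write-up. Your kernel-sliding reduction does \emph{not} leave you with a fresh discrepancy problem on $\leq n$ columns: after sliding, the at-most-$n$ still-fractional coordinates of $\bx$ sit at arbitrary values in $(0,1)$, not at $0.5$. What remains is a \emph{linear} discrepancy problem on the submatrix $\bA|_F$ with starting vector $(\bx_t)|_F$, i.e., you need $\bx|_F \in \{0,1\}^{|F|}$ with $\|\bA|_F((\bx_t)|_F - \bx|_F)\|_\infty$ small. The partial-coloring lemma you then invoke is stated only for the centered case $\by \in \{-1,0,+1\}^n$, which does not directly apply. The standard fix (and what Alon--Spencer actually do to get Corollary~12.3.4) is to insert the inequality $\lindisc(\bA|_F) \leq 2\cdot\operatorname{herdisc}(\bA|_F)$ and then apply Spencer's $O(\sqrt{n})$ bound to every column-submatrix of $\bA|_F$; alternatively one can observe that the partial-coloring/entropy argument goes through from any starting point. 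Either way, this step needs to be made explicit --- as written, ``we may assume $m \leq n$ at no cost to the discrepancy'' is not justified.

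A smaller point: the structural property that drives the Hoeffding/entropy step is the bound on the \emph{entries} (equivalently, the row $\ell_2$ norm of the restricted matrix), not the column $\ell_2$ norm you cite. For $[0,1]$ entries both happen to be bounded by $\sqrt{n}$, so your conclusion about the extension from $\{0,1\}$ to $[0,1]$ is correct, but the stated reason is not the operative one.
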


The above bound is also known to be asymptotically tight:

\begin{lemma}[\citep{Spencer85}] \label{lem:disc-lb}
For any $n \in \N$, it holds that $\exdisc(n) \geq \Omega(\sqrt{n})$.
\end{lemma}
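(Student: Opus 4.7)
The plan is to exhibit an explicit matrix $\bA \in [0,1]^{n \times m}$ (in fact with entries in $\{0,1\}$) whose discrepancy is $\Omega(\sqrt{n})$, via a Hadamard construction. Let $n'$ be the largest power of two with $2n' \leq n$, so that $n' = \Omega(n)$, and let $\bH \in \{-1,+1\}^{n' \times n'}$ be a Sylvester--Hadamard matrix, satisfying $\bH^\top \bH = n' \bI$. Writing $\mathbf{J}$ for the $n' \times n'$ all-ones matrix, I will take $\bA$ to be the vertical stack of $(\mathbf{J} + \bH)/2$, $(\mathbf{J} - \bH)/2$, and (if $n > 2n'$) an additional block of $n - 2n'$ zero rows; every entry lies in $\{0,1\}\subseteq[0,1]$, and since zero rows cannot decrease $\|\cdot\|_\infty$, it suffices to lower-bound the discrepancy of the first $2n'$ rows.

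Given any $\bx \in \{0,1\}^{n'}$, I set $\mathbf{s} := \bone - 2\bx \in \{-1,+1\}^{n'}$, so that $0.5\bone - \bx = \mathbf{s}/2$. With $c := \bone^\top \mathbf{s}$ and $\bv := \bH\mathbf{s}$, a direct computation yields
\begin{align*}
\tfrac{1}{2}(\mathbf{J} + \bH)(0.5\bone - \bx) &= \tfrac{1}{4}(c\bone + \bv), \\
\tfrac{1}{2}(\mathbf{J} - \bH)(0.5\bone - \bx) &= \tfrac{1}{4}(c\bone - \bv).
\end{align*}
By the Hadamard identity, $\|\bv\|_2^2 = \mathbf{s}^\top \bH^\top \bH \mathbf{s} = n' \|\mathbf{s}\|_2^2 = (n')^2$, so some index $i^*$ satisfies $|\bv_{i^*}| \geq \|\bv\|_2/\sqrt{n'} = \sqrt{n'}$. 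For that coordinate, $\max\bigl(|c + \bv_{i^*}|,\, |c - \bv_{i^*}|\bigr) \geq |\bv_{i^*}|$, so one of the two stacked blocks contributes an entry of absolute value at least $\sqrt{n'}/4$. Since this holds for \emph{every} $\bx$, we obtain $\disc(\bA) \geq \sqrt{n'}/4 = \Omega(\sqrt{n})$, and hence $\exdisc(n) \geq \Omega(\sqrt{n})$.

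The main obstacle I anticipate is the interaction between the shift by $\mathbf{J}$---needed to embed the $\pm 1$ Hadamard entries into $[0,1]$---and the Hadamard signal itself: a priori, the constant $c\bone$ could cancel the large coordinate $\bv_{i^*}$ in a single shifted block, destroying the discrepancy bound. Stacking both $(\mathbf{J}+\bH)/2$ and $(\mathbf{J}-\bH)/2$ so that $\bH$ contributes with both signs circumvents this, because $c + \bv_{i^*}$ and $c - \bv_{i^*}$ cannot simultaneously be small in absolute value. Values of $n$ that are not exactly twice a power of two are handled by the zero-row padding together with the bound $n' \geq n/4$, which only affects the hidden constant.
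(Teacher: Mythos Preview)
Your argument is correct. The paper does not give its own proof of this lemma; it is simply cited from Spencer. Your Hadamard construction is, however, very close in spirit to the paper's proof of the related Proposition~\ref{prop:wdisc-lb} (the weighted-discrepancy lower bound), which also builds $\bA$ from $\bW=\tfrac{1}{2}(\bone_{n\times n}+\bH)$. The difference lies in how the constant shift is neutralized: you stack both $\tfrac{1}{2}(\bone_{n'\times n'}+\bH)$ and $\tfrac{1}{2}(\bone_{n'\times n'}-\bH)$ so that $c\,\bone$ cannot cancel $\bv_{i^*}$ in both blocks at once; the paper instead keeps a single block and uses the all-ones first row of $\bW$ to argue that either $|pn-\langle\bone,\bx\rangle|$ is already large, or else enough coordinates of $\bx$ equal $1$ that the $\ell_2$ estimate of Lemma~\ref{lem:hadamard-discrepancy} applies. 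Your doubling trick is a little more direct and avoids the case split, at the cost of using $2n'$ rows rather than $n'$.
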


\paragraph{Weighted Discrepancy.} The \emph{$p$-weighted discrepancy}~\citep{DoerrSr03} is a generalization of discrepancy where $0.5$ is replaced by some $p \in [0, 1]$:
\begin{align*}
\wdisc_p(\bA) := \min_{\bx \in \{0, 1\}^m} \|\bA(p \cdot \bone - \bx)\|_{\infty}.
\end{align*}
Similarly to above, let
\begin{align*}
\exwdisc_p(n) := \sup_{m \in \N} \sup_{\bA \in [0, 1]^{n \times m}} \wdisc_p(\bA).
\end{align*}

Using standard techniques in discrepancy theory, we can prove the following lower bound on $\exwdisc_p(n)$. (This bound was also implicit in the work of Doerr and Srivastav.)

\begin{proposition} \label{prop:wdisc-lb}
For any $p \in (0, 1/2]$ and any $n \in \N$ such that $n \geq 16/p$, we have $\exwdisc_p(n) \geq \Omega(\sqrt{pn})$.
\end{proposition}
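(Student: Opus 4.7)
The plan is to adapt the classical Hadamard-matrix construction underlying Lemma~\ref{lem:disc-lb}. First, I would reduce to the case where the number of rows is a power of~$2$: let $n'$ denote the largest power of~$2$ with $n' \leq n$, so that $n' > n/2$ and hence $pn' > pn/2 \geq 8$. Any $n' \times n'$ matrix can be padded with $n - n'$ zero rows without changing its $p$-weighted discrepancy, so it suffices to exhibit an $n' \times n'$ matrix with $p$-weighted discrepancy $\Omega(\sqrt{pn'})$.

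Take $\bH \in \{-1, +1\}^{n' \times n'}$ to be the Sylvester Hadamard matrix, normalized so that its first row and its first column are both all-ones; in particular $\bH^T \bH = n' \bI$ and $\bH \bone = \bH^T \bone = n' \bone(\{1\})$. Let $\bA := (\bH + J)/2 \in \{0, 1\}^{n' \times n'}$, where $J$ is the all-ones matrix. For any $\bx \in \{0,1\}^{n'}$ with $s := \bone^T \bx$, set $T := pn' - s$. A short computation using the identities above shows that the first coordinate of $\bA(p\bone - \bx)$ equals $T$, while for $i \geq 2$ the $i$-th coordinate equals $\bigl(T - (\bH \bx)_i\bigr)/2$.

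I would then split into two cases. If $|T| \geq c_1 \sqrt{pn'}$ for a small absolute constant $c_1$, the first coordinate alone yields $\|\bA(p\bone - \bx)\|_\infty = \Omega(\sqrt{pn'})$. Otherwise, $s$ lies within $c_1 \sqrt{pn'}$ of $pn'$; combined with $p \leq 1/2$ and $pn' \geq 8$, this forces $s = \Theta(pn')$ and $n' - s = \Theta(n')$, so $s(n' - s) = \Omega(pn'^2)$. To pick out a large coordinate among $i \geq 2$, I would compute $\sum_{i \geq 2}\bigl(T - (\bH\bx)_i\bigr)^2$ exactly using $\|\bH\bx\|_2^2 = n' s$, $(\bH \bx)_1 = s$, and $\sum_i (\bH\bx)_i = n' \bx_1$; this equals $(n'-1)T^2 - 2T(n' \bx_1 - s) + s(n' - s)$. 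Averaging over $i \geq 2$ and bounding the cross term $|2T(n' \bx_1 - s)|$ by $2|T|\cdot n'$ then produces some $i \geq 2$ with $(T - (\bH\bx)_i)^2 = \Omega(pn')$, so $\|\bA(p\bone - \bx)\|_\infty = \Omega(\sqrt{pn'}) = \Omega(\sqrt{pn})$.

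The main obstacle is bookkeeping the constants in the second case: the main term $s(n'-s)/(n'-1)$ is of order $pn'$, whereas the cross term contributes at most $2|T|n'/(n'-1) = O(\sqrt{pn'})$, and we need the former to dominate. This is precisely why the hypothesis $n \geq 16/p$ is used—it ensures $pn'$ exceeds an absolute constant, so the linear-in-$pn'$ main term dominates the $O(\sqrt{pn'})$ correction, and $c_1$ can be chosen uniformly.
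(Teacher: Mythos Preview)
Your proposal is correct and follows essentially the same approach as the paper: the same Hadamard-based matrix $\bA=(\bH+J)/2$, the same reduction to powers of two, and the same two-case split depending on whether $|pn'-s|$ is already $\Omega(\sqrt{pn'})$. The only cosmetic difference is that in the second case the paper invokes \Cref{lem:hadamard-discrepancy} as a black box to lower-bound $\|\bA(p\bone-\bx)\|_2^2$ via $\|p\bone-\bx\|_2^2$, whereas you expand $\sum_{i\ge 2}(T-(\bH\bx)_i)^2$ directly using the identities $\|\bH\bx\|_2^2=n's$, $(\bH\bx)_1=s$, and $\sum_i(\bH\bx)_i=n'\bx_1$; these two computations are equivalent.
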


To prove \Cref{prop:wdisc-lb}, we need the following lemma, which was implicit in the work of \cite{Chazelle01} and made explicit by \cite{CharikarNeNi11}.
Recall that a Hadamard matrix of order $n$ is a matrix $\bH \in \{-1, +1\}^{n\times n}$ such that the rows are mutually orthogonal. 
By multiplying $-1$ to each column as necessary, we may assume throughout that the first row of $\bH$ consists of only $1$'s. Similarly, we may assume that the first column of $\bH$ consists of only $1$'s.

\begin{lemma}[\citep{Chazelle01,CharikarNeNi11}] \label{lem:hadamard-discrepancy}
Let $\bH$ be a Hadamard matrix of order $n$, and $\bW := \frac{1}{2}(\bone_{n \times n} + \bH)$ where $\bone_{n \times n}$ is an $n \times n$ all-$1$ matrix. 
Then, for any $\bz \in \R^n$, we have $\|\bW\bz\|_2^2 \geq n \cdot \left(\sum_{i \in  [n] \setminus \{1\}} \bz_i^2\right)$.
\end{lemma}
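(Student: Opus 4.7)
The plan is to compute $\bW^\top \bW$ in closed form using the Hadamard structure of $\bH$, and then to read off the bound by completing the square in the resulting quadratic form $\bz^\top \bW^\top \bW \bz$. Since $\bone_{n \times n}$ is symmetric, the expansion begins from
\begin{align*}
4\,\bW^\top \bW &= (\bone_{n \times n} + \bH)^\top(\bone_{n \times n} + \bH) \\
&= \bone_{n \times n}\bone_{n \times n} + \bone_{n \times n}\bH + \bH^\top\bone_{n \times n} + \bH^\top\bH.
\end{align*}

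I would then evaluate each of the four pieces. Two are immediate: $\bone_{n \times n}\bone_{n \times n} = n\,\bone_{n \times n}$ (each entry is a sum of $n$ ones), and $\bH^\top\bH = n\,\bI_n$ by the mutual orthogonality of the rows (equivalently columns) of $\bH$. The two cross terms are where \emph{both} normalizations of $\bH$ enter crucially: since the first column of $\bH$ is all $1$'s and every other column is orthogonal to it, the column sums of $\bH$ are exactly $(n, 0, \dots, 0)$. Reading $\bone_{n \times n}\bH$ column by column then gives $\bone_{n \times n}\bH = n\,\bone_n e_1^\top$ and, by transposing, $\bH^\top\bone_{n \times n} = n\,e_1 \bone_n^\top$. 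Letting $s := \sum_i \bz_i$ and evaluating $\bz^\top(\cdot)\bz$ on each of the four pieces produces contributions $s^2$, $s\bz_1$, $\bz_1 s$, and $\|\bz\|_2^2$, respectively, hence
\[
4\,\|\bW\bz\|_2^2 \;=\; n\bigl(s^2 + 2s\bz_1 + \|\bz\|_2^2\bigr).
\]

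The proof then closes with a one-line algebraic identity: $s^2 + 2s\bz_1 + \|\bz\|_2^2 = (s + \bz_1)^2 + \bigl(\|\bz\|_2^2 - \bz_1^2\bigr) = (s + \bz_1)^2 + \sum_{i \neq 1}\bz_i^2$. Dropping the non-negative square $(s + \bz_1)^2$ yields the claimed lower bound, with the overall prefactor just tracking the $\tfrac{1}{2}$ in the definition of $\bW$. I do not anticipate a genuinely hard step; the only conceptual point is recognizing that the cross term $2s\bz_1$ produced by the mixed Hadamard pieces is exactly what is needed to absorb the first coordinate $\bz_1^2$ into the non-negative square $(s + \bz_1)^2$, so that only $\sum_{i \neq 1}\bz_i^2$ survives on the right-hand side. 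The most delicate detail to get right will be the column-sum computation of $\bH$, since this is the unique spot where the first-column normalization (as opposed to just the first-row normalization) is actually used.
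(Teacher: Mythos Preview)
The paper does not actually prove this lemma; it is cited from \citep{Chazelle01,CharikarNeNi11} and used as a black box. Your argument is the standard one and is structurally correct: expand $\bW^\top\bW$, use $\bH^\top\bH = n\bI$ and the column-sum identity $\bone_n^\top\bH = n e_1^\top$ (which, as you correctly observe, is exactly where the first-\emph{column} normalization is used), and then complete the square.

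There is, however, a genuine slip in your last line. Your own identity reads $4\|\bW\bz\|_2^2 = n\bigl((s+\bz_1)^2 + \sum_{i\neq 1}\bz_i^2\bigr)$, so dropping the square yields
\[
\|\bW\bz\|_2^2 \;\geq\; \frac{n}{4}\sum_{i\neq 1}\bz_i^2,
\]
not $n\sum_{i\neq 1}\bz_i^2$. Your remark that ``the overall prefactor just track[s] the $\tfrac12$'' hides the fact that the $\tfrac12$ gets squared, and the missing factor of $4$ does not cancel against anything. In fact the lemma as printed in the paper is off by exactly this factor: for $n=2$ with $\bH=\begin{pmatrix}1&1\\1&-1\end{pmatrix}$ one has $\bW=\begin{pmatrix}1&1\\1&0\end{pmatrix}$, and taking $\bz=(0,1)^\top$ gives $\|\bW\bz\|_2^2 = 1 < 2 = n\cdot\bz_2^2$. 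So what you have actually (and correctly) proved is the $n/4$ version. This is harmless for the paper --- every application of the lemma (in the proofs of \Cref{prop:wdisc-lb} and \Cref{lem:hardness-multicolor-disc}) only uses the asymptotic order --- but you should state the constant you obtain rather than wave it away.
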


\begin{proof}[Proof of \Cref{prop:wdisc-lb}]
We will show that $\exwdisc_p(n) \geq \Delta := \sqrt{pn/8}$ for every $n \geq 16/p$ such that a Hadamard matrix of order $n$ exists. 
Since it is known that such a matrix exists whenever $n$ is a power of two~\citep{sylvester67} and the function $\exwdisc_p(n)$ is nondecreasing in $n$, this also establishes the lower bound of the same order for all $n$.

We choose $\bA = \bW$ defined in \Cref{lem:hadamard-discrepancy}. Now, consider any $\bx \in \{0, 1\}^n$. Since the first row of $\bA$ is an all-$1$ vector, we have $\|\bA(p \cdot \bone - \bx)\|_{\infty} \geq \left|\left<\bone, p \cdot \bone - \bx\right>\right| = |pn - \left<\bone, \bx\right>|$.
Thus, if $|pn - \left<\bone, \bx\right>| \geq \Delta$, then we have $\|\bA(p \cdot \bone - \bx)\|_{\infty} \geq \Delta$, as desired. Therefore, it suffices to consider the case $|pn - \left<\bone, \bx\right>| < \Delta$. This means that more than $pn - \Delta$ coordinates of $\bx$ are equal to $1$; for such a coordinate $i$, we have $(p \cdot \bone - \bx)_i = -(1 - p)$. This implies that
\begin{align*}
\|p \cdot \bone - \bx\|_2^2 &\geq (1 - p)^2 \cdot (pn - \Delta) \ge \frac{pn - \Delta}{4} \geq \Delta^2 + 1;
\end{align*}
the last inequality is equivalent to $pn \ge \sqrt{pn/2} + 8$ and follows from our assumption that $n \geq 16/p$.

Hence, applying \Cref{lem:hadamard-discrepancy}, we arrive at
\begin{align*}
\|\bA(p \cdot \bone - \bx)\|_2^2 \geq n\left(\|p \cdot \bone - \bx\|_2^2 - 1\right) \geq n\Delta^2,
\end{align*}
which implies that $\|\bA(p \cdot \bone - \bx)\|_\infty \geq \Delta$, as desired.
\end{proof}

\subsubsection{Multi-Color Discrepancy}

We will also use the extension of the $2$-color definition to multi-color cases due to \citet{DoerrSr03}. Recall that a \emph{$k$-coloring} of $[m]$ is a function $\chi: [m] \to [k]$. The \emph{$k$-color discrepancy} of $\bA$ is defined as
\begin{align*}
\disc(\bA, k) := \min_{\chi: [m] \to [k]} \max_{s \in [k]} \left\|\bA\left(\frac{1}{k} \cdot \bone - \bone(\chi^{-1}(s))\right)\right\|_{\infty}.
\end{align*}
Note that $\disc(\bA, 2)$ coincides with $\disc(\bA)$ defined earlier.

Similarly to above, we let
\begin{align*}
\exdisc(n, k) := \sup_{m \in \N} \sup_{\bA \in [0, 1]^{n \times m}} \disc(\bA, k).
\end{align*}

The following lemma is a consequence of Corollary 3.5 of~\cite{DoerrSr03} and \Cref{lem:disc-ub}. 
\begin{lemma}[\citep{DoerrSr03}] \label{lem:disc-ub-multi}
For any $n, k \in \N$, $\exdisc(n, k) \leq O(\sqrt{n})$.
\end{lemma}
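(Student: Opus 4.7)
I would derive the $k$-color bound from the $2$-color bound of \Cref{lem:disc-ub} by recursive halving. First I would handle the case $k = 2^t$: starting from the trivial $1$-coloring of $[m]$, iterate for $t$ rounds; in round $s$, split each of the $2^{s-1}$ current color classes $C$ by applying \Cref{lem:disc-ub} to the submatrix of $\bA$ whose columns are indexed by $C$. That submatrix still has $n$ rows and entries in $[0, 1]$, so it admits a $2$-coloring of its columns with discrepancy at most $C_0 \sqrt{n}$ for an absolute constant $C_0$.

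Next I would track the accumulated error. Writing $r_i := \sum_{j} A_{ij}$ and $r_i(C) := \sum_{j \in C} A_{ij}$, let $E_s := \max_{i, C} |r_i(C) - r_i/2^s|$ where the maximum ranges over rows $i$ and current classes $C$ after round $s$. When a class $C$ with $r_i(C) = r_i/2^s + \delta$ (so $|\delta| \leq E_s$) is split into sub-classes $C', C''$, the $2$-coloring guarantee gives $r_i(C') = r_i(C)/2 + \varepsilon = r_i/2^{s+1} + \delta/2 + \varepsilon$ with $|\varepsilon| \leq C_0\sqrt{n}$, yielding the recursion $E_{s+1} \leq E_s/2 + C_0\sqrt{n}$. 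Unrolling from $E_0 = 0$ gives $E_t \leq 2C_0\sqrt{n}$ independent of $t$, hence $\exdisc(n, 2^t) \leq O(\sqrt{n})$.

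For general $k$, I would carry out the same recursion with \emph{unbalanced} splits: a class slated to be partitioned into $\ell$ final colors is split into sub-classes slated for $\lceil \ell/2 \rceil$ and $\lfloor \ell/2 \rfloor$ final colors via a $p$-weighted $2$-coloring of its columns with $p = \lceil \ell/2 \rceil / \ell$. The identical geometric-series accounting again bounds the total deviation by $O(\sqrt{n})$, provided the weighted $2$-color bound $\exwdisc_p(n) \leq O(\sqrt{n})$ holds uniformly in $p \in [0, 1]$; this follows from the same Spencer-type partial-coloring argument that underlies \Cref{lem:disc-ub}.

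The main obstacle is the quantitative point that the additive error introduced at each recursion level is halved by every subsequent split, making the total accumulated deviation a convergent geometric series of size $O(\sqrt{n})$ independent of $k$ (rather than growing like $\log k$, which a naive accounting would suggest). A secondary technical point is establishing the uniform-in-$p$ upper bound on the weighted $2$-color discrepancy used in the general-$k$ case.
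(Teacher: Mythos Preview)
Your proposal is correct and follows essentially the same approach as the paper (and the original argument of \citet{DoerrSr03}): recursive halving via the $2$-color (or, for general $k$, the weighted/linear) discrepancy bound, with the key observation that the per-level error contracts geometrically so the total stays $O(\sqrt{n})$ independent of $k$. The paper's Appendix~\ref{app:discrepancy-algo} sketches exactly this recursion, phrased in terms of linear discrepancy (\Cref{cor:efficient-linear-disc}) rather than weighted discrepancy, but the two coincide for starting vectors of the form $p\cdot\bone$.
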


Furthermore, Doerr and Srivastav also proved the following lower bound:\footnote{\citet{DoerrSr03} only stated their lower bound for $n$ such that a Hadamard matrix of order~$n$ exists. However, as explained in our proof of \Cref{prop:wdisc-lb}, this implies the same asymptotic bound for all $n \in \N$.}

\begin{lemma}[{\cite[Theorem 5.2]{DoerrSr03}}] \label{lem:disc-lb-multi}
For any $n,k \in \N$ such that $k \geq 2$, $\exdisc(n, k) \geq \Omega(\sqrt{n/k})$.
\end{lemma}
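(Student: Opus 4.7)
The plan is to reduce the multi-color discrepancy lower bound to the weighted two-color lower bound established in Proposition \ref{prop:wdisc-lb}, so that the Hadamard construction and its $\ell_2$ analysis need not be redone.

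The central observation is that, for any $\bA \in [0,1]^{n \times m}$ and any $k$-coloring $\chi: [m] \to [k]$, every indicator vector $\bone(\chi^{-1}(s))$ lies in $\{0,1\}^m$, and is therefore a feasible choice of $\bx$ in the definition of $\wdisc_{1/k}(\bA)$. Restricting the outer maximum to the distinguished color $s = 1$ gives
\[
\max_{s \in [k]} \left\|\bA\!\left(\tfrac{1}{k}\bone - \bone(\chi^{-1}(s))\right)\right\|_\infty \;\geq\; \left\|\bA\!\left(\tfrac{1}{k}\bone - \bone(\chi^{-1}(1))\right)\right\|_\infty \;\geq\; \wdisc_{1/k}(\bA).
\]
Minimizing over $\chi$ yields $\disc(\bA, k) \geq \wdisc_{1/k}(\bA)$, and then taking the supremum over $m$ and $\bA$ produces $\exdisc(n, k) \geq \exwdisc_{1/k}(n)$.

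For the regime $n \geq 16k$, I would simply invoke Proposition \ref{prop:wdisc-lb} with $p = 1/k$; the hypotheses $p \in (0, 1/2]$ and $n \geq 16/p$ both hold because $k \geq 2$. The conclusion $\exwdisc_{1/k}(n) \geq \Omega(\sqrt{n/k})$ combined with the previous step finishes this case.

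In the remaining regime $n < 16k$, the target $\sqrt{n/k}$ is bounded by a constant, so it suffices to verify $\exdisc(n, k) = \Omega(1)$. I would use the trivial construction $\bA = [1] \in [0,1]^{1 \times 1}$: any $k$-coloring sends the unique column to some color $s^*$, producing $|1/k - 1| = (k-1)/k \geq 1/2$ at that color. Padding $\bA$ with $n-1$ zero rows preserves the discrepancy, so this covers all $n \geq 1$. The main conceptual step is identifying the right reduction; once that is in hand the proof is essentially automatic, since all the analytic work is already packaged inside Proposition \ref{prop:wdisc-lb}.
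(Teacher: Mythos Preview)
Your proof is correct. The paper does not actually prove this lemma; it cites \cite[Theorem~5.2]{DoerrSr03} directly, noting only in a footnote that Doerr and Srivastav's statement assumes a Hadamard matrix of order $n$ exists and that the extension to all $n$ follows by the monotonicity argument used in the proof of \Cref{prop:wdisc-lb}. Your route---reducing the $k$-color lower bound to the $1/k$-weighted lower bound via the one-line observation $\disc(\bA,k)\geq\wdisc_{1/k}(\bA)$ and then invoking \Cref{prop:wdisc-lb}---is different in packaging: it makes the bound self-contained within the paper rather than deferring to the external reference. The analytic core (the Hadamard construction and the $\ell_2$ estimate of \Cref{lem:hadamard-discrepancy}) is identical either way, since \Cref{prop:wdisc-lb} and the cited Doerr--Srivastav theorem rest on the same machinery; what your reduction buys is modularity, plus an explicit treatment of the small-$n$ regime via the trivial $1\times 1$ witness that the citation alone does not supply.
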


\section{Approximate Fair Division From Multi-Color Discrepancy}
\label{sec:approx-multi}

In this section, we derive generic upper and lower bounds for the value $\ccdiv$ based on the multi-color discrepancy bounds $\exdisc(n, k)$. Our results are stated formally below.
\begin{theorem} \label{thm:lb-direct}
For any $n, k \in \N$, we have
\[\ccdivemph(n) \geq \lceil \exdisc(n, k) \rceil.\]
\end{theorem}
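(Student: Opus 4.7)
The plan is to derive a hard fair-division instance from a matrix witnessing near-optimal multi-color discrepancy, exploiting that values lie in $[0,1]$ so removing any $c$ goods changes an agent's utility by at most $c$.

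Set $D := \lceil \exdisc(n,k) \rceil$. The case $D = 0$ is vacuous, so assume $D \geq 1$. By the definition of supremum, I can pick $m \in \N$ and $\bA \in [0, 1]^{n \times m}$ with $\disc(\bA, k) > D - 1$ (the slack $\exdisc(n,k) - (D-1)$ is strictly positive, so such an $\bA$ exists). I then define the fair-division instance on $m$ goods and $n$ agents by setting $u^{(i)}(\ell) := \bA_{i, \ell}$; since the consensus $1/k$-division notion ignores how agents are grouped, I may arrange them arbitrarily into $k$ groups.

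Next, I fix an arbitrary allocation $(A_1, \dots, A_k)$ and let $\chi : [m] \to [k]$ be the corresponding $k$-coloring, so that $A_s = \chi^{-1}(s)$. Then the $a$-th coordinate of $\bA(\tfrac{1}{k} \bone - \bone(\chi^{-1}(s)))$ equals $\tfrac{1}{k} u^{(a)}(G) - u^{(a)}(A_s)$. By the choice of $\bA$, there must exist an agent $a$ and a bundle $s$ with
\[\left|u^{(a)}(A_s) - \tfrac{1}{k} u^{(a)}(G)\right| > D - 1.\]
Now I use a simple averaging argument: since $\sum_{s'} u^{(a)}(A_{s'}) = u^{(a)}(G)$, at least one bundle has value $\leq \tfrac{1}{k} u^{(a)}(G)$ and at least one has value $\geq \tfrac{1}{k} u^{(a)}(G)$. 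Pairing the outlier bundle $s$ with one on the opposite side yields two bundles $A_s, A_t$ with
\[u^{(a)}(A_t) - u^{(a)}(A_s) > D - 1,\]
after possibly relabeling so that $t$ denotes the larger.

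Finally, I argue that this gap rules out the consensus condition with $c = D - 1$. Indeed, for agent $a$ to value $A_s$ at least as much as $A_t \setminus B$ for some $B \subseteq A_t$ of size at most $D - 1$, one would need $u^{(a)}(B) \geq u^{(a)}(A_t) - u^{(a)}(A_s) > D - 1$. But since every good has utility at most $1$, $u^{(a)}(B) \leq |B| \leq D - 1$, a contradiction. Hence the allocation is not a consensus $1/k$-division up to $D-1$ goods; as the allocation was arbitrary, the instance witnesses $\ccdiv(n) \geq D$. I don't anticipate any genuine obstacle here: the only delicate points are handling the supremum correctly (working with a strict inequality on a near-optimal $\bA$) and packaging the two-sided deviation into a pairwise gap, both of which are routine.
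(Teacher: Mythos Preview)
Your proof is correct and follows the same reduction as the paper: build utilities from the rows of a near-optimal discrepancy matrix $\bA$, then use that all entries lie in $[0,1]$ so removing $c$ goods changes utility by at most $c$. The one cosmetic difference is directional: the paper argues contrapositively (if a consensus $1/k$-division up to $\Delta-1$ goods existed, then all pairwise gaps $|u^j(A_i)-u^j(A_s)|\le \Delta-1$, and a triangle-inequality average bounds each deviation from the mean by $\Delta-1$, contradicting $\disc(\bA,k)>\Delta-1$), whereas you argue directly (a large deviation from the mean forces, by pigeonhole on bundles, a pairwise gap exceeding $D-1$, which rules out consensus up to $D-1$). Both conversions between ``deviation from mean'' and ``pairwise gap'' are elementary and yield the same bound; neither buys anything the other doesn't.
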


\begin{theorem} \label{thm:ub-main}
For any $n, k \in \N$, we have \[\ccdivemph(n) \leq 4 \cdot \lceil \exdisc(2n, k) \rceil.\]
\end{theorem}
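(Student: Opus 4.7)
The plan is to construct a matrix $\bA \in [0,1]^{2n \times m}$ to which \Cref{lem:disc-ub-multi} applies, yielding the desired allocation directly. Let $d := \lceil \exdisc(2n,k) \rceil$. For each agent $a^{(i,j)}$ with nonzero utility, I would set $M^{(i,j)} := \max_\ell u^{(i,j)}(\ell)$ and $L^{(i,j)} := \{\ell \in [m] : u^{(i,j)}(\ell) \geq M^{(i,j)}/2\}$ (the ``large'' goods for that agent), and include two rows in $\bA$: the normalized utility vector $u^{(i,j)}/M^{(i,j)}$ and the indicator vector of $L^{(i,j)}$.

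Applying \Cref{lem:disc-ub-multi} produces a $k$-coloring $\chi: [m] \to [k]$ of discrepancy at most $d$; I would set $A_s := \chi^{-1}(s)$. The discrepancy bound on the two rows for each agent $a^{(i,j)}$ and each $s \in [k]$ then gives
\[
\bigl|u^{(i,j)}(A_s) - u^{(i,j)}(G)/k\bigr| \leq d \, M^{(i,j)}
\qquad\text{and}\qquad
\bigl||A_s \cap L^{(i,j)}| - |L^{(i,j)}|/k\bigr| \leq d.
\]
Applying the first inequality to both $A_s$ and $A_{s'}$ yields $\Delta := u^{(i,j)}(A_s) - u^{(i,j)}(A_{s'}) \leq 2d \, M^{(i,j)}$ whenever $u^{(i,j)}(A_s) \geq u^{(i,j)}(A_{s'})$.

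Next I would argue that $(A_1,\dots,A_k)$ is a consensus $1/k$-division up to $4d$ goods. Fix an agent $a^{(i,j)}$ and bundles $A_s, A_{s'}$ with $u^{(i,j)}(A_s) \geq u^{(i,j)}(A_{s'})$; it suffices to exhibit $B \subseteq A_s$ with $|B| \leq 4d$ and $u^{(i,j)}(B) \geq \Delta$. If $|A_s \cap L^{(i,j)}| \geq 4d$, the top $4d$ goods of $A_s$ by $u^{(i,j)}$-value lie in $L^{(i,j)}$ and so each has value at least $M^{(i,j)}/2$; summing gives $u^{(i,j)}(B) \geq 2d \, M^{(i,j)} \geq \Delta$. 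Otherwise I would take $B := A_s \cap L^{(i,j)}$ (which automatically satisfies $|B| < 4d$) and argue that the remaining ``small'' part of $A_s$ has total value at most $u^{(i,j)}(A_{s'})$.

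The hardest step is closing this second case. Here one must leverage both discrepancy bounds in tandem: $|A_s \cap L^{(i,j)}| < 4d$ combined with the indicator-row bound forces $|L^{(i,j)}|/k < 5d$, so $L^{(i,j)}$ is small relative to the discrepancy scale, which limits how much the large-good contributions to different bundles can diverge; together with the utility-row bound this should pin down the small-good mass of $A_s$ relative to $u^{(i,j)}(A_{s'})$ within the allowed slack. The factor $4$ in the final bound appears to absorb two independent factors of $2$: one from the symmetric $\pm d \, M^{(i,j)}$ deviation of each bundle from the mean $u^{(i,j)}(G)/k$, and one from the threshold $M^{(i,j)}/2$ that converts a count of large goods into a lower bound on their value.
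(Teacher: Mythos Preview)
Your Case~2 cannot be closed; the construction itself is the problem, not just the argument. Take $k=2$ and consider an agent who values a single good at $M^{(i,j)}=1$ and every other good at some tiny $\eps>0$. Then $L^{(i,j)}$ is a singleton, so your indicator row has discrepancy at most $1/2$ under \emph{any} coloring and is vacuous. A coloring that places the unique large good in $A_{s'}$ and packs $A_s$ with enough $\eps$-goods to reach $u^{(i,j)}(A_s)=u^{(i,j)}(G)/2 + d$ satisfies both of your row bounds, yet yields $\Delta=2d$ while every good in $A_s$ has value $\eps$. No subset $B\subseteq A_s$ of size $4d$---indeed of size $Cd$ for any fixed constant $C$---has value close to $\Delta$ once $\eps$ is small, so the allocation is not a consensus halving up to $4d$ goods. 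The informal sentence about ``$|L^{(i,j)}|/k<5d$ limiting how much the large-good contributions can diverge'' does not help: the utility imbalance here comes entirely from the small goods, and your full-utility row cannot isolate their contribution.

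The paper's proof repairs precisely these two design choices. First, the large set is taken to be the \emph{top $L:=\min\{m,3Tk\}$ goods by value} (with $T=d$), so its size is fixed; the indicator-row bound then forces every bundle to contain between $L/k-T=2T$ and $L/k+T=4T$ large goods. Second, the other row is the utility of the \emph{small goods only}, scaled by $1/p^j$ where $p^j$ is the minimum large-good value; this keeps the small-good utilities of any two bundles within $2Tp^j$ of each other. One then removes $B:=A_{i'}\cap S^j_{\text{large}}$ (of size at most $4T$) from the envied bundle: after removal the small-good parts of $A_i$ and $A_{i'}$ differ by at most $2Tp^j$, while $A_i$ still holds at least $2T$ large goods each worth at least $p^j$, exactly covering that gap. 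The decisive structural point is that the paper controls the small-good mass \emph{directly} through its second row, whereas your second row mixes large and small contributions and therefore cannot deliver the needed separation.
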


By the known bounds for $\exdisc$ (\Cref{lem:disc-ub-multi,lem:disc-lb-multi}), the two results above yield \Cref{thm:main-cd}.

Using the relationships between $\ccdiv$, $\cef$, and $\cprop$  established in \Cref{prop:fairness-relation}, we get the following corollary:

\begin{corollary}
For any $k, n', n_1, \dots, n_k \in \N$, we have
\begin{enumerate}[(a)]
\item $\cefemph(n_1, \dots, n_k) \leq 4 \cdot \lceil \exdisc(2(n_1 + \dots + n_k), k) \rceil$;
\item $\cpropemph(n_1, \dots, n_k) \leq 4 \cdot \lceil \exdisc(2(n_1 + \dots + n_k), k) \rceil$;
\item $\cefemph(n', \dots, n') \geq \lceil \exdisc(n', k) \rceil$, where there are $k$ copies of $n'$ on the left-hand side.
\end{enumerate}
\end{corollary}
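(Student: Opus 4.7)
The plan is essentially to chain the three inequalities from \Cref{prop:fairness-relation} with the generic bounds on $\ccdiv$ given by \Cref{thm:lb-direct} and \Cref{thm:ub-main}. Since both theorems have just been stated with $\ccdiv$ as either the upper or lower side of the inequality (depending on direction), each part of the corollary is a one-line deduction, and there is no genuine obstacle to overcome.

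For part~(a), I would start from \Cref{prop:fairness-relation}(a), which gives $\cef(n_1,\dots,n_k) \le \ccdiv(n_1+\dots+n_k)$, and then apply \Cref{thm:ub-main} with $n := n_1+\dots+n_k$ to bound the right-hand side by $4 \cdot \lceil \exdisc(2(n_1+\dots+n_k),k)\rceil$. Part~(b) follows by stacking \Cref{prop:fairness-relation}(b), which says $\cprop(n_1,\dots,n_k) \le \cef(n_1,\dots,n_k)$, on top of part~(a) we have just proved. For part~(c), I would invoke \Cref{prop:fairness-relation}(c) to write $\cef(n',\dots,n') \ge \ccdiv(n')$, and then apply \Cref{thm:lb-direct} to lower-bound $\ccdiv(n')$ by $\lceil \exdisc(n',k)\rceil$.

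The only thing to double-check in writing this up is that the parameters line up correctly: in particular, in (a) one must be careful that \Cref{thm:ub-main} is applied with the total number of agents $n = n_1+\dots+n_k$, which produces the factor of $2n$ inside $\exdisc$, matching the statement. Since this step is mechanical and all the heavy lifting has already been done in the theorems being cited, I do not anticipate any real difficulty, and the whole argument can be delivered in just a few lines.
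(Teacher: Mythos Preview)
Your proposal is correct and matches the paper's own treatment: the corollary is stated immediately after \Cref{thm:lb-direct} and \Cref{thm:ub-main} with the remark that it follows by combining those theorems with the relations in \Cref{prop:fairness-relation}, and the chaining you describe is exactly the intended derivation.
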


Parts (a) and (b), together with the known upper bound for $\exdisc$ (\Cref{lem:disc-ub-multi}), give us the upper bounds in \Cref{thm:main-ef,thm:main-prop}.
On the other hand, part (c) is not sufficient for the lower bounds in these theorems yet, since the above corollary does not apply to the ``unbalanced groups'' case where some groups are small, e.g., when $n_1 = \cdots = n_{k - 1} = 1$ and $n_k = n'$. Indeed, the lower bound parts of \Cref{thm:main-ef,thm:main-prop}  will be handled in \Cref{sec:lower-weighted}.

\subsection{Lower Bound}

We prove \Cref{thm:lb-direct} via a simple reduction that views each row of a matrix $\bA$ as a vector of utilities for the goods. 
The existence of a consensus $1/k$-division up to a small number of goods would imply a strong upper bound on the discrepancy of $\bA$. 
This is formalized below; since the distribution of agents into groups is irrelevant for consensus $1/k$-division, we use the notation $a^j$ and $u^j$ instead of $a^{(i,j)}$ and $u^{(i,j)}$.

\begin{proof}[Proof of~\Cref{thm:lb-direct}]
Let $\Delta := \lceil \exdisc(n, k) \rceil$.
Note that $\Delta - 1 < \exdisc(n, k)$. Thus, there exists $m\in\N$ and $\bA \in [0, 1]^{n \times m}$ such that $\disc(\bA,k) > \Delta - 1$. We define the agents' utilities by $u^{j}(\ell) = \bA_{j, \ell}$ for all $j \in [n]$ and $\ell \in [m]$.
It suffices to show that there is no consensus $1/k$-division up to $\Delta - 1$ goods with respect to these utilities.

Suppose for the sake of contradiction that there is a consensus $1/k$-division up to $\Delta - 1$ goods, $(A_1, \dots, A_k)$.
Let $\chi: [m] \to [k]$ denote the coloring corresponding to this allocation, i.e., $\chi^{-1}(s) = A_s$.

Consider any agent $a^j$. Since the allocation is a consensus $1/k$-division up to $\Delta - 1$ goods and each good has value at most $1$ to each agent, the following holds for all $i,s \in [k]$:
\begin{align} \label{eq:bounded-diff}
\Delta - 1 &\geq |u^j(A_i) - u^j(A_s)|.
\end{align}
Fix any $s\in[k]$.
For each $j\in[n]$, this inequality allows us to bound the $j$th entry of $\bA\left(\frac{1}{k}\cdot\bone - \bone(\chi^{-1}(s))\right)$ as
\begin{align*}
\left|\left(\bA\left(\frac{1}{k}\cdot\bone - \bone(\chi^{-1}(s))\right)\right)_j\right|
&= \left|\frac{u^j(G)}{k} - u^j(A_s)\right| \\
&= \left|\frac{1}{k} \sum_{i \in [k]} (u^j(A_i) - u^j(A_s))\right| \\
&\leq \frac{1}{k} \sum_{i \in [k]} \left|u^j(A_i) - u^j(A_s)\right| \\
&\overset{\eqref{eq:bounded-diff}}{\leq} \Delta - 1,
\end{align*}
where the first inequality follows from the triangle inequality.

Applying this for all $j \in [n]$, we have $\disc(\bA,k) \leq \Delta - 1$, contradicting our assumption that $\disc(\bA,k) > \Delta - 1$.
\end{proof}

\subsection{Upper Bound}

We next prove our upper bound (\Cref{thm:ub-main}), which turns out to be more challenging than the lower bound. To demonstrate this, let us consider using the ``reverse'' of the reduction in the proof of \Cref{thm:lb-direct}; specifically, suppose we create one row for each agent corresponding to her utilities. 
The discrepancy bound ensures that we can divide the goods into $k$ bundles so that each agent's \emph{utilities} for the $k$ bundles are not too different. 
However, this does not translate into any bound on the \emph{number} of goods necessary in the relaxation of any of the fairness notions, since it is possible that an agent has a tiny utility for every good in some bundle.

To tackle this issue, we must also ensure that each agent has some ``large'' (i.e., valuable) goods in every bundle. 
To this end, we divide the set of goods (with respect to each agent) into the set of large goods and the set of ``small'' goods. 
We create one row as before, but only for the small goods; this is to ensure that the utilities of the agent for the small goods do not differ by much between bundles. 
Additionally, we create a row corresponding to the large goods, which ensures that the agent has a non-trivial number of large goods in each bundle. 
When choosing the size of the set of large goods appropriately, this gives us the desired bound.
We formalize this intuition in the proof below.

\begin{proof}[Proof of~\Cref{thm:ub-main}]
Let $T := \lceil \exdisc(2n, k) \rceil$. 
For every agent $a^j$, let $S^j_{\text{large}} \subseteq G$ denote the set of $L := \min\{m, 3Tk\}$ goods that the agent values the most (ties broken arbitrarily), and let $p^j := \min_{\ell \in S^j_{\text{large}}} u^j(\ell)$. 
We define $\by^{j}$ as the $m$-dimensional indicator vector of $S^j_{\text{large}}$, i.e., $\by^{j} = \bone(S^j_{\text{large}})$.
We also define $\bz^j$ as the utility vector of the goods outside of $S^j_{\text{large}}$, scaled by $1 / p^j$, i.e.,
\begin{align*}
\bz^j_\ell =
\begin{cases}
u^j(\ell) / p^j & \text{ if } \ell \notin S^j_{\text{large}}; \\
0 & \text{ otherwise.}
\end{cases}
\end{align*}
By our choice of $S^j_{\text{large}}$ and $p^j$, we have $\bz^j \in [0, 1]^m$. (We use the convention $0/0 = 0$, i.e., $\bz^j$ is the all-zero vector when $p^j = 0$.)

We define our matrix $\bA = \begin{bmatrix}
\by^1
\cdots
\by^n
\bz^1
\cdots
\bz^n
\end{bmatrix}^T$; note that
$\bA \in [0, 1]^{2n \times m}$.
From the definition of $\exdisc$, there exists a coloring $\chi: [m] \to [k]$ such that
\begin{align} \label{eq:disc-bound}
\left\|\bA\left(\frac{1}{k} \cdot \bone - \bone(\chi^{-1}(i))\right)\right\|_{\infty} \leq T.
\end{align}
for all $i \in [k]$.
We pick our allocation $(A_1, \dots, A_k)$ according to $\chi$, that is, $A_i = \chi^{-1}(i)$ for all $i \in [k]$.

Next, we argue that for every pair $i, i' \in [k]$ and every agent $a^j$, there exists $B \subseteq A_{i'}$ of size at most $4T$ such that $u^j(A_i) \geq u^j(A_{i'} \setminus B)$; this suffices to finish the proof.
From \eqref{eq:disc-bound} and the definition of $\bA$, we have
\begin{align*}
T \geq \left|\left<\by^{j}, \left(\frac{1}{k} \cdot \bone - \bone(A_i)\right)\right>\right| = \left|\frac{L}{k} - |A_i \cap S^j_{\text{large}}|\right|
\end{align*}
and
\begin{align*} 
T 
&\geq \left|\left<\bz^j, \left(\frac{1}{k} \cdot \bone - \bone(A_i)\right)\right>\right| = \frac{1}{p^j} \cdot \left|\frac{1}{k} \cdot u^j(G \setminus S^j_{\text{large}}) - u^j(A_i \setminus S^j_{\text{large}})\right|.
\end{align*}
Rearranging these, we have
\begin{align} \label{eq:large-balanced-1}
\left|\frac{L}{k} - |A_i \cap S^j_{\text{large}}|\right| \leq T
\end{align}
and
\begin{align} \label{eq:small-balanced-1}
\left|\frac{1}{k} \cdot u^j(G \setminus S^j_{\text{large}}) - u^j(A_i \setminus S^j_{\text{large}})\right| \leq p^j \cdot T.
\end{align}
An analogous argument on bundle $A_{i'}$ implies
\begin{align} \label{eq:large-balanced-2}
\left|\frac{L}{k} - |A_{i'} \cap S^j_{\text{large}}|\right| \leq T
\end{align}
and
\begin{align} \label{eq:small-balanced-2}
\left|\frac{1}{k} \cdot u^j(G \setminus S^j_{\text{large}}) - u^j(A_{i'} \setminus S^j_{\text{large}})\right| \leq p^j \cdot T.
\end{align}

Let $B := A_{i'} \cap S^j_{\text{large}}$. 
By \eqref{eq:large-balanced-2}, we have $|B| \leq L/k + T \leq 4T$. Now, if $m \leq 3Tk$, then we have $L = m$ and $A_{i'} \setminus B = \emptyset$. Thus, $u^j(A_i) \geq u^j(A_{i'} \setminus B)$ trivially holds in this case.

Next, consider the case $m > 3Tk$, so $L = 3Tk$. In this case, we may bound $u^j(A_i) - u^j(A_{i'} \setminus B)$ as follows:
\begin{align*}
u^j(A_i) - u^j(A_{i'} \setminus B) 
&= u^j(A_i \cap S^j_{\text{large}}) + \left(u^j(A_i \setminus S^j_{\text{large}}) - u^j(A_{i'} \setminus S^j_{\text{large}})\right) \\
&\geq p^j \cdot |A_i \cap S^j_{\text{large}}| + \left(u^j(A_i \setminus S^j_{\text{large}}) - u^j(A_{i'} \setminus S^j_{\text{large}})\right) \\
&\overset{\eqref{eq:large-balanced-1}}{\geq}  p^j \left(\frac{L}{k} - T\right) + \left(u^j(A_i \setminus S^j_{\text{large}}) - u^j(A_{i'} \setminus S^j_{\text{large}})\right) \\
&\hspace{-1.9mm}\overset{\eqref{eq:small-balanced-1},\eqref{eq:small-balanced-2}}{\geq} p^j \left(\frac{L}{k} - T\right) - 2p^j \cdot T \\
&\geq 0,
\end{align*}
where the first inequality follows from our definition of $p^j$ and the last inequality follows from $L = 3Tk$.
This concludes the proof.
\end{proof}

\section{Lower Bounds From Weighted Discrepancy}
\label{sec:lower-weighted}

In this section, we prove a lower bound on $\cprop$ for $k$ groups via weighted discrepancy:

\begin{theorem} \label{thm:lb-prop}
For any $n', k \in \N$, we have 
\[\cpropemph(2n', 1, \dots, 1) \geq \lceil \exwdisc_{1/k}(n') / k \rceil.\]
\end{theorem}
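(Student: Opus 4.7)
The plan is to prove this lower bound by a reduction from an extremal weighted-discrepancy instance, using a \emph{dual-agent} trick to turn PROP's one-sided guarantee into a two-sided discrepancy bound. Let $\Delta := \lceil \exwdisc_{1/k}(n')/k \rceil$, so that $k(\Delta - 1) < \exwdisc_{1/k}(n')$. Then there exist $m \in \N$ and a matrix $\bA \in [0,1]^{n' \times m}$ with $\wdisc_{1/k}(\bA) > k(\Delta - 1)$. My goal is to exhibit a fair-division instance on $m$ goods in which no PROP$(\Delta - 1)$ allocation exists.

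For the construction, for each row $j \in [n']$ of $\bA$ I place two agents in the large group: agent $a^{(1, 2j-1)}$ with utility $u(\ell) = \bA_{j, \ell}$ and agent $a^{(1, 2j)}$ with the dual utility $u(\ell) = 1 - \bA_{j, \ell}$ (both lie in $[0, 1]^m$). Each of the $k-1$ singleton groups $i \in \{2, \ldots, k\}$ holds one agent whose utility is uniform: $u^{(i, 1)}(\ell) = 1$ for every good $\ell$. The uniform singletons' role is to pin $|A_1|$ to roughly $m/k$, while the two dual agents together sandwich $\sum_{\ell \in A_1} \bA_{j, \ell}$ around $\frac{1}{k} \sum_\ell \bA_{j, \ell}$.

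The crux of the argument is to suppose, for contradiction, that $(A_1, \ldots, A_k)$ is a PROP$(\Delta - 1)$ allocation, and set $c := \Delta - 1$. Applying PROP$c$ to each of the $k - 1$ singletons yields $|A_i| \ge m/k - c$, and summing gives $|A_1| \le m/k + (k-1)c$. Applying PROP$c$ to $a^{(1, 2j-1)}$ yields $\sum_{\ell \in A_1} \bA_{j, \ell} \ge \frac{1}{k} \sum_\ell \bA_{j, \ell} - c$. Applying PROP$c$ to $a^{(1, 2j)}$ and unpacking $u^{(1,2j)} = 1 - \bA_{j, \cdot}$ gives $\sum_{\ell \in A_1} \bA_{j, \ell} \le \frac{1}{k} \sum_\ell \bA_{j, \ell} + (|A_1| - m/k) + c \le \frac{1}{k} \sum_\ell \bA_{j, \ell} + kc$. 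Together these show $\left|\frac{1}{k} \sum_\ell \bA_{j, \ell} - \sum_{\ell \in A_1} \bA_{j, \ell}\right| \le kc$ for every $j \in [n']$, so taking $\bx = \bone(A_1)$ we obtain $\wdisc_{1/k}(\bA) \le kc = k(\Delta - 1)$, contradicting the choice of $\bA$.

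The main conceptual obstacle is the asymmetry between PROP (a one-sided relaxation $u(A_i) \ge u(G)/k - \text{slack}$) and weighted discrepancy (a two-sided absolute-value quantity); this is precisely what forces doubling the number of agents in the large group. The factor of $k$ appearing in the denominator of the bound arises because the singletons' PROP$c$ guarantees are loose by a factor of $k-1$ when they are combined to control $|A_1|$, and this slack then propagates into the upper half of the sandwich inequality.
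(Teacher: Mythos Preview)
Your proof is correct and is essentially identical to the paper's own argument: the same dual-agent construction in the large group, the same uniform-utility singletons to control $|A_1|$, and the same two-sided sandwich yielding $\wdisc_{1/k}(\bA) \le k(\Delta-1)$. The only cosmetic difference is that the paper indexes the paired agents as $a^{(1,j)}$ and $a^{(1,j+n')}$ rather than $a^{(1,2j-1)}$ and $a^{(1,2j)}$.
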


Combined with \Cref{prop:fairness-relation}, this gives a similar lower bound for $\cef$ (again, the left-hand side contains $k-1$ $1$'s):
\begin{corollary} \label{cor:lb-ef}
For any $n', k \in \N$, we have 
\[\cefemph(2n', 1, \dots, 1) \geq \lceil \exwdisc_{1/k}(n') / k \rceil.\]
\end{corollary}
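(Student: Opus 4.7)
The plan is essentially a one-line deduction chaining \Cref{thm:lb-prop} with \Cref{prop:fairness-relation}(b). First I would invoke \Cref{thm:lb-prop} with the same parameters $n'$ and $k$ to obtain
\[
\cprop(2n', 1, \dots, 1) \;\geq\; \lceil \exwdisc_{1/k}(n') / k \rceil,
\]
where there are $k-1$ copies of $1$ on the left-hand side.

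Next, I would apply \Cref{prop:fairness-relation}(b), which asserts $\cprop(n_1,\dots,n_k) \leq \cef(n_1,\dots,n_k)$ for every tuple of group sizes. Specializing to the tuple $(2n', 1, \dots, 1)$ gives $\cprop(2n', 1, \dots, 1) \leq \cef(2n', 1, \dots, 1)$. Composing the two inequalities yields the claimed bound
\[
\cef(2n', 1, \dots, 1) \;\geq\; \cprop(2n', 1, \dots, 1) \;\geq\; \lceil \exwdisc_{1/k}(n') / k \rceil,
\]
which is exactly the statement of the corollary.

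There is no real obstacle here; the content of the result sits entirely inside \Cref{thm:lb-prop}, and the corollary just records the automatic strengthening from proportionality to envy-freeness via the elementary fact that every EF$c$ allocation is PROP$c$ (already proved as part of \Cref{prop:fairness-relation}). The only thing to be mindful of is making the tuple notation consistent, i.e., that both sides refer to the same group-size vector with $k-1$ singleton groups.
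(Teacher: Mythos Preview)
Your proposal is correct and matches the paper's own justification: the corollary is stated immediately after \Cref{thm:lb-prop} with the remark that it follows by combining that theorem with \Cref{prop:fairness-relation}, which is precisely the chain $\cef \geq \cprop \geq \lceil \exwdisc_{1/k}(n')/k \rceil$ you wrote out.
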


These two results, together with the lower bound on $\exwdisc_p$ (\Cref{prop:wdisc-lb}) and the observation that removing agents does not increase the value of $\cef$ or $\cprop$, yield the lower bound parts of \Cref{thm:main-ef,thm:main-prop}.
More specifically, we claim that for any $n_1,\dots,n_k$, we have
\begin{equation} \label{eq:remove-agents}
\cef(\max\{n_1, \dots, n_k\}, 1, 1, \dots, 1) \le \cef(n_1, \dots, n_k).
\end{equation}
To see why this claim holds, first note that due to symmetry, we may assume that $n_1 = \max\{n_1, \dots, n_k\}$. 
Note also that if an allocation satisfies a fairness notion for a certain instance, it still satisfies the same fairness notion when we arbitrarily remove some agent(s) from some group(s).
Hence, given any instance with group sizes $n_1, 1, \dots, 1$, we may add agents with arbitrary utilities so that the group sizes become $n_1, \dots, n_k$ and apply the bound for the latter case.
This implies \eqref{eq:remove-agents},
whereupon we can apply \Cref{cor:lb-ef} and \Cref{prop:wdisc-lb} to derive \Cref{thm:main-ef}. 
A similar argument can be used to derive \Cref{thm:main-prop} from \Cref{thm:lb-prop}.

Before we proceed to prove \Cref{thm:lb-prop}, we describe the intuition behind it.
We construct the utility functions of the agents so that if $(A_1, \dots, A_k)$ is proportional up to a small number of goods, then $\bx := \bone(A_1)$ gives us a small $1/k$-weighted discrepancy. 
Similarly to the proof of \Cref{thm:lb-direct}, we start by creating $n'$ agents in the first group where each agent's utilities correspond to a row of $\bA$. 
This yields a lower bound on each entry of $\bA \cdot \bone(A_1)$. 
To get an upper bound, we simply create a ``conjugate'' of each of these agents in the first group, i.e., the conjugate utility of each good is simply $1$ minus the original utility. 
However, this construction alone is not sufficient---for example, it is still possible to assign all goods to $A_1$. 
To avoid this, we create one agent in each of the remaining groups with the same utility for all goods. 
This ensures that $A_1$ has size roughly $m/k$, which turns out to be sufficient for bounding the $1/k$-weighted discrepancy (using $\bx := \bone(A_1)$).

\begin{proof}[Proof of \Cref{thm:lb-prop}]
Let $\Delta := \lceil \exwdisc_{1/k}(n') / k \rceil$. Note that $k(\Delta - 1) < \exwdisc_{1/k}(n')$. 
Thus, there exists $m\in\N$ and $\bA \in [0, 1]^{n' \times m}$ such that $\wdisc_{1/k}(\bA) > k(\Delta - 1)$. 
For each $\ell\in[m]$, we define the utilities of the agents by
\begin{align*}
u^{(i, 1)}(\ell) = 1
\end{align*}
for all $i \in \{2, \dots, k\}$, and
\begin{align*}
u^{(1, j)}(\ell) = \bA_{j, \ell}, \qquad u^{(1, j + n')}(\ell) = 1 - \bA_{j, \ell}
\end{align*}
for all $j \in [n']$.

Suppose for the sake of contradiction that there is an allocation $(A_1, \dots, A_k)$ that is proportional up to $\Delta - 1$ goods.
Consider agent $a^{(i, 1)}$ for all $i \in \{2, \dots, k\}$. 
The proportionality up to $\Delta - 1$ goods of the agent implies that $|A_i| \geq m / k - (\Delta - 1)$. 
From this, we have
\begin{align} \label{eq:first-bundle-size}
|A_1| \leq \frac{m}{k} + (k - 1)(\Delta - 1).
\end{align}

Next, consider agent $a^{(1, j)}$ for $j\in[n']$.
Since the allocation is proportional up to $\Delta - 1$ goods and each good has value at most $1$ to the agent, we have
\begin{align*}
u^{(1, j)}(A_1) &\geq \frac{u^{(1, j)}(G)}{k} - (\Delta - 1),
\end{align*}
or equivalently,
\begin{align}
\label{eq:delta-positive}
\Delta - 1 &\geq \frac{u^{(1, j)}(G)}{k} - u^{(1, j)}(A_1).
\end{align}
Similarly, if we instead consider agent $a^{(1, j + n')}$ for $j\in[n']$, we have
\begin{align*}
\Delta - 1 &\geq \frac{u^{(1, j + n')}(G)}{k} - u^{(1, j + n')}(A_1) \\
&= \frac{m - u^{(1, j)}(G)}{k} - (|A_1| - u^{(1, j)}(A_1)) \\
&= \left(\frac{m}{k} - |A_1|\right) - \left(\frac{u^{(1, j)}(G)}{k} - u^{(1, j)}(A_1)\right) \\
&\overset{\eqref{eq:first-bundle-size}}{\geq} -(k - 1)(\Delta - 1) - \left(\frac{u^{(1, j)}(G)}{k} - u^{(1, j)}(A_1)\right),
\end{align*}
which means that
\begin{align}
\label{eq:delta-negative}
k(\Delta - 1) \ge -\left(\frac{u^{(1, j)}(G)}{k} - u^{(1, j)}(A_1)\right).
\end{align}

From \eqref{eq:delta-positive} and \eqref{eq:delta-negative}, we can conclude that
\begin{align*}
k(\Delta - 1) &\geq \left|\frac{u^{(1, j)}(G)}{k} - u^{(1, j)}(A_1)\right| = \left|\left(\bA\left(\frac{1}{k}\cdot\bone - \bone(A_1)\right)\right)_j\right|.
\end{align*}

Applying this for all $j \in [n']$, we have $\wdisc_{1/k}(\bA) \leq k(\Delta - 1)$, contradicting our assumption that $\wdisc_{1/k}(\bA) > k(\Delta - 1)$.
\end{proof}

\section{Computational Complexity}
\label{sec:complexity}

Since efficient algorithms matching the bound in \Cref{lem:disc-ub-multi} are known~\citep{Bansal10,LovettMe15,LevyRR17}\footnote{Please refer to Appendix~\ref{app:discrepancy-algo} for more details.} and all of our upper bounds are obtained by polynomial-time reductions to this bound, it immediately follows that given the goods and the agents' utilities for them, we can efficiently find an allocation matching the upper bounds in our main theorems (Theorems~\ref{thm:main-ef}--\ref{thm:main-cd}). In summary, we have:
\begin{corollary} \label{cor:efficient-algo}
There exists a deterministic polynomial-time algorithm that can compute a consensus $1/k$-division (or an envy-free/proportional allocation) up to $O(\sqrt{n})$ goods.
\end{corollary}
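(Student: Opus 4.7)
The plan is to observe that every step in the proofs of Theorems~\ref{thm:main-ef}, \ref{thm:main-prop}, and \ref{thm:main-cd} is constructive once we have an efficient algorithm achieving the discrepancy bound of \Cref{lem:disc-ub-multi}. So the overall strategy is a chain of polynomial-time reductions: (i) reduce the fair-division instance to a matrix $\bA \in [0,1]^{2n \times m}$ as in the proof of \Cref{thm:ub-main}; (ii) run an efficient discrepancy algorithm on $\bA$ to obtain the coloring $\chi$; (iii) read off the allocation $A_i := \chi^{-1}(i)$ and invoke \Cref{prop:fairness-relation}(a)--(b) to transfer the guarantee from consensus $1/k$-division to EF and PROP.

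First, for step (ii) I would appeal to the constructive versions of Spencer's discrepancy bound: Bansal's SDP-based algorithm and the subsequent Lovett--Meka random-walk algorithm both give randomized polynomial-time procedures that, on input $\bA \in [0,1]^{n \times m}$, output a $2$-coloring with discrepancy $O(\sqrt{n})$ with high probability. To lift this to the multi-color setting needed here, I would invoke the reduction of Doerr--Srivastav underlying \Cref{lem:disc-ub-multi}: a $k$-coloring with discrepancy $O(\sqrt{n})$ can be obtained by iteratively applying $2$-colorings in a binary-tree fashion of depth $\lceil \log_2 k\rceil$, each layer blowing up the discrepancy by a constant factor that is absorbed into the $O(\sqrt{n})$. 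Since each $2$-coloring call runs in polynomial time and only $O(\log k)$ calls are made, the whole multi-color procedure runs in polynomial time. (The paper points to Appendix~\ref{app:discrepancy-algo} for exactly this discussion, so I would cite it.)

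Next, for step (i) I would verify that the matrix $\bA$ built in the proof of \Cref{thm:ub-main} is computable in polynomial time from the input utilities: for each agent $a^j$, sort the goods by $u^j(\cdot)$ to find $S^j_{\text{large}}$ and $p^j$, then write down the two rows $\by^j = \bone(S^j_{\text{large}})$ and $\bz^j$ (whose entries are $u^j(\ell)/p^j$ on the small goods, $0$ elsewhere). This is $O(nm \log m)$ work. Feeding $\bA$ into the multi-color discrepancy algorithm yields $\chi:[m]\to[k]$ satisfying \eqref{eq:disc-bound}, and the analysis in the proof of \Cref{thm:ub-main} then shows, deterministically conditional on \eqref{eq:disc-bound} holding, that $(A_1,\dots,A_k)$ is a consensus $1/k$-division up to $4\lceil \exdisc(2n,k)\rceil = O(\sqrt{n})$ goods.

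Finally, step (iii) is immediate: by \Cref{prop:fairness-relation}(a), this allocation is already EF$O(\sqrt{n})$ for any grouping of the agents, and by \Cref{prop:fairness-relation}(b) it is also PROP$O(\sqrt{n})$. I expect the only genuine subtlety to be ensuring that the constructive discrepancy bounds quoted from \cite{Bansal10,LovettMe15} apply to matrices with entries in $[0,1]$ rather than just $\{0,1\}$, and that the Doerr--Srivastav multi-color reduction preserves polynomial-time computability; both points are essentially routine and are precisely what the footnote and appendix in the paper are there to justify, so I would defer the detailed verification to the appendix rather than reprove it here.
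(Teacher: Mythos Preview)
Your proposal is correct and follows essentially the same approach as the paper: the paper likewise observes that the reduction in the proof of \Cref{thm:ub-main} is polynomial-time, invokes the constructive discrepancy algorithms of \citet{Bansal10} and \citet{LovettMe15}, lifts them to $k$ colors via the Doerr--Srivastav recursion (spelled out in Appendix~\ref{app:discrepancy-algo}), and then appeals to \Cref{prop:fairness-relation}. One minor slip worth noting: the recursive $k$-coloring procedure makes $O(k)$ two-coloring calls in total (one per internal node of the binary recursion tree), not $O(\log k)$, and the discrepancy accumulates \emph{additively} along each root-to-leaf path rather than multiplicatively per layer---but neither point affects the polynomial-time or $O(\sqrt{n})$ conclusions, and you rightly defer the details to the appendix.
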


In light of the above corollary, a natural question is whether we can improve on this $O(\sqrt{n})$ bound if we know that, say, an unknown ``fully fair'' division exists for a given instance. 
For example, provided that there is a consensus $1/k$-division in that instance, can we efficiently find an allocation that beats the upper bounds in \Cref{cor:efficient-algo}? 

A similar question has been asked in the context of discrepancy theory; for the bound in \Cref{lem:disc-ub}, the answer was shown to be negative~\citep{CharikarNeNi11}, i.e., even when $\bA$ has discrepancy zero, it is NP-hard to find $\bx$ achieving discrepancy $o(\sqrt{n})$. In this section, we extend this hardness to the setting of fair division, as stated below.

\begin{theorem} \label{thm:hardness-fair-division}
For any constant $k \in \N \setminus \{1\}$, there exists a constant $\eps_k > 0$ such that it is NP-hard, given $m$ goods and $k$ groups, each containing $n'$ agents with additive utilities, to distinguish between the following two cases:
\begin{itemize}
\item (YES) There exists a consensus $1/k$-division;
\item (NO) No allocation is proportional up to $\eps_k \sqrt{n'}$ goods.
\end{itemize}
\end{theorem}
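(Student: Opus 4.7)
The approach is a reduction from the \citet{CharikarNeNi11} hardness for approximating $2$-color discrepancy: there is an absolute constant $\delta > 0$ such that, given $\bA \in \{0,1\}^{n \times m}$, it is NP-hard to distinguish between $\disc(\bA) = 0$ and $\disc(\bA) \geq \delta \sqrt{n}$. The plan is to mirror the lower-bound construction from the proof of Theorem~\ref{thm:lb-prop}, so that the promised gap in discrepancy gets transported directly into a gap between ``consensus $1/k$-division exists'' and ``no PROP up to $\eps_k\sqrt{n'}$ allocation exists.''

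I will begin with the case $k = 2$. Given a CNN instance $\bA$, I set $n' = 2n$ and construct a fair division instance with two groups of $n'$ agents each: Group~1 contains $n$ ``original'' agents with $u^{(1,j)}(\ell) = \bA_{j,\ell}$ together with $n$ ``conjugate'' agents with $u^{(1,j+n)}(\ell) = 1 - \bA_{j,\ell}$, while Group~2 contains $n'$ ``uniform'' agents, each of whom values every good at $1$. Via standard preprocessing of $\bA$ (for instance, duplicating each column and adjoining an all-$1$ row as in the construction of Theorem~\ref{thm:ub-main}), I may assume that $m$ is even and that in the YES case any zero-discrepancy witness can be taken with $|A_1| = m/2$.

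For the YES direction, a balanced zero-discrepancy coloring $\bx$ yields an allocation $(A_1, A_2)$ with $A_1 = \{\ell : \bx_\ell = 1\}$ in which each uniform agent is perfectly satisfied (since $|A_1| = |A_2| = m/2$), each original agent sees a perfect halving of their row sum, and each conjugate agent's halving follows automatically from the previous two. For the NO direction, suppose $(A_1, A_2)$ is PROP up to $c$ goods; applying the PROP inequality to the uniform agents gives $||A_1| - m/2| \leq c$, and combining it for each original-conjugate pair then gives $|u^{(1,j)}(A_1) - u^{(1,j)}(G)/2| \leq 2c$ for every $j \in [n]$. Writing $\bx = \bone(A_1)$, this reads $\|\bA(0.5 \bone - \bx)\|_\infty \leq 2c$, i.e., $\disc(\bA) \leq 2c$; picking $\eps_2$ small enough that $2\eps_2\sqrt{n'} < \delta \sqrt{n}$ produces the desired contradiction with the CNN NO promise.

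For general $k \geq 3$, I would append $k - 2$ further groups of $n'$ uniform agents each, so that proportionality of each additional uniform agent forces $|A_i|$ to be within $O(c)$ of $m/k$ for every $i$; the same original-vs-conjugate calculation then shows that any PROP up to $c$ allocation induces a $k$-coloring witnessing $\disc(\bA, k) \leq O(kc)$, while the YES case corresponds to zero $k$-color discrepancy with color classes of size $m/k$. The main obstacle is thus lifting the CNN hardness from $2$ to $k$ colors while preserving the $\sqrt{n}$ gap \emph{and} the balanced-class property needed in the YES direction; I plan to do this by a padding argument (replicating each column of the CNN instance and adjoining forcing rows that enforce both the $k$-way split and the equal class sizes) so that a zero $k$-color discrepancy coloring of the padded matrix projects back to a zero $2$-color discrepancy coloring of the original. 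Once this lift is in hand, the reduction and the analysis above carry over mutatis mutandis with $\delta$ replaced by a $k$-dependent constant $\delta_k > 0$ and $\eps_2$ by $\eps_k := \delta_k / \Theta(k)$.
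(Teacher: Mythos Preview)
Your $k=2$ sketch is workable (modulo tightening the preprocessing for balance), but for general $k$ the proposal has a genuine gap, and even for $k=2$ it is more complicated than necessary.

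The paper's reduction from discrepancy to fair division is cleaner than yours: put the \emph{same} $n$ agents (one per row of $\bA$) into \emph{every} group, with $u^{(i,j)}(\ell)=\bA_{j,\ell}$ for all $i$. In the YES case, $\disc(\bA,k)=0$ immediately yields a consensus $1/k$-division---every agent values all $k$ bundles equally---with no need for balanced color classes, conjugate agents, or uniform agents. In the NO case, PROP$c$ applied to agent $j$ in group $i$ gives $\bigl(\bA(\frac{1}{k}\bone-\bone(A_i))\bigr)_j\le c$; summing this over all $i\ne i^*$ for the same $j$ (which is possible precisely because the agent appears in every group) gives the reverse inequality for color $i^*$, hence $\disc(\bA,k)\le (k-1)c$. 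Your original/conjugate/uniform construction, borrowed from the worst-case lower bound of \Cref{thm:lb-prop}, only controls the single color assigned to group~1 and forces you to engineer balance separately.

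The substantive gap is elsewhere: both approaches need that distinguishing $\disc(\bA,k)=0$ from $\disc(\bA,k)\ge\Omega(\sqrt{n})$ is NP-hard, and your ``padding argument (replicating each column \dots and adjoining forcing rows)'' does not do this. Naive column replication collapses the NO gap (color copies with distinct colors), and there is no elementary way to force a $k$-coloring of a padded matrix to project back to a $2$-coloring of the original. The paper proves this $k$-color hardness as a standalone lemma by reducing from \textsc{Max-2-2-Set-Splitting}, using an expander-based gadget so that mixing the new columns with the old ones incurs measurable $\ell_2$ cost, and then amplifying the constant gap to $\Omega(\sqrt{n})$ via composition with a Hadamard matrix. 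This is the main technical content of the theorem; the step you labeled ``the main obstacle'' is indeed the obstacle, and it is not a padding argument.
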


As a consequence, when $k$ is constant, we cannot asymptotically improve upon the bound in \Cref{cor:efficient-algo} even when we are promised that a consensus $1/k$-division exists (assuming P $\ne$ NP).
Indeed, if there exists a polynomial-time algorithm that improves this bound, then given an instance belonging to one of the two cases of \Cref{thm:hardness-fair-division}, we can run this algorithm on it.
If the instance admits a consensus $1/k$-division, the output of the algorithm will be a consensus $1/k$-division up to $o(\sqrt{n})$ goods, while if no allocation is proportional up to $\varepsilon_k\sqrt{n}$ goods, then the output will be a consensus $1/k$-division up to $\Omega(\sqrt{n})$ goods (or the algorithm does not output any allocation).
Hence, we would be able to distinguish between the two cases of \Cref{thm:hardness-fair-division} in polynomial time; however, this is impossible by \Cref{thm:hardness-fair-division} unless P $=$ NP.
Note also that since consensus $1/k$-division is the strongest notion and proportionality the weakest (see \Cref{prop:fairness-relation}), \Cref{thm:hardness-fair-division} is the ``strongest possible''.

The remainder of this section is devoted to the proof of \Cref{thm:hardness-fair-division} and is organized as follows. 
First, in Section~\ref{subsec:hardness-multicolor}, we generalize the result of~\cite{CharikarNeNi11} to the setting of multi-color discrepancy.
Then, in Section~\ref{sec:hardness-multicolor-to-fair-div}, we reduce from the hardness of multi-color discrepancy to the hardness of (almost) fair division.

\subsection{Hardness of Multi-Color Discrepancy}
\label{subsec:hardness-multicolor}

We start by proving the following hardness result, which is a generalization of Charikar et al.'s result  for $k = 2$ to arbitrary~$k$.

\begin{lemma} \label{lem:hardness-multicolor-disc}
For any $k \in \N \setminus \{1\}$, there exists $\delta > 0$ such that the following holds: It is NP-hard, given a matrix $\bA \in [0, 1]^{n \times m}$, to distinguish between the following two cases:
\begin{itemize}
\item (YES) $\disc(\bA, k) = 0$;
\item (NO) $\disc(\bA, k) > \delta\sqrt{n}$.
\end{itemize}
\end{lemma}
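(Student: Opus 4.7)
The plan is to adapt the proof of \citet{CharikarNeNi11} for the $k=2$ case to arbitrary $k \geq 2$, combining (i) a constant-gap NP-hardness reduction from a multi-color analogue of $\setsplit$, with (ii) a Hadamard-matrix amplification, in the spirit of their $k=2$ argument, that uses the multi-color Hadamard machinery from \citet{DoerrSr03} already invoked in the proof of \Cref{prop:wdisc-lb}.

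For step (i), I would reduce from a gap version of \emph{Max-$k$-$k$-Set-Splitting}: given subsets of size $k^2$ of a ground set $[N]$, distinguish between (YES) every subset admits a $k$-coloring placing exactly $k$ elements in each color class, and (NO) no $k$-coloring of $[N]$ evenly splits more than a $(1-\eps'_k)$-fraction of the subsets. The NP-hardness of this gap problem for some universal constant $\eps'_k > 0$ should follow from a standard PCP-style argument (e.g.\ via Label Cover) paralleling the hardness of $\setsplit$ used in the $k=2$ case. The output matrix is the $\{0,1\}$-incidence matrix of the $M$ constraints: a YES instance gives $\disc(\bA,k)=0$ because every color class meets each constraint in exactly $k$ elements, while a NO instance forces $\disc(\bA,k)\geq 1$ on at least one constraint row---a constant gap, independent of the instance size.

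For step (ii), I would amplify the constant gap to $\Omega(\sqrt{n})$ via a Hadamard product: replace each column $v$ of $\bA$ by $N$ amplified copies indexed by the rows of a Hadamard matrix $\bH$ of order $N$, with entries in $[0,1]$ obtained by the shift $\tfrac{1}{2}(1+\bH_{h,v})$, and then pad with a few extra rows so that the incidence rows from step~(i) survive. In the YES case, the original satisfying $k$-coloring lifts---by symmetry across the $N$ copies of each column---to a zero-discrepancy $k$-coloring of the amplified matrix. In the NO case, a hypothetical $k$-coloring with discrepancy below $\delta_k\sqrt{n}$ must be ``almost balanced'' on every Hadamard row; an $\ell_2$-to-$\ell_\infty$ argument via \Cref{lem:hadamard-discrepancy} in its multi-color form would then let us round (say, by majority vote across the $N$ copies of each column) to a $k$-partition of $[N]$ that evenly splits more than a $(1-\eps'_k)$-fraction of the original constraints, contradicting step (i).

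The main obstacle is the NO-case rounding in step (ii): quantifying how low $k$-color discrepancy of the amplified matrix translates into near-even splits of the underlying Max-$k$-$k$-Set-Splitting instance. The Hadamard-variance argument used by \citet{CharikarNeNi11} for $k=2$ extends to general $k$, but the dependence on $k$ must be tracked: the $1/\sqrt{k}$ slack in the bound $\exdisc(n,k)\geq\Omega(\sqrt{n/k})$ (\Cref{lem:disc-lb-multi}) is absorbed into the constant $\delta$ in the statement and therefore does not affect the $\sqrt{n}$ rate. A secondary, largely routine step is to check that gap Max-$k$-$k$-Set-Splitting is itself NP-hard for every fixed $k$, which should follow from standard PCP reductions analogous to the one for $\setsplit$.
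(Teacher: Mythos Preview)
Your two-step outline diverges from the paper's proof in two substantive ways, and the divergences are where the argument is currently incomplete.

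First, the paper does \emph{not} reduce from a $k$-color CSP. It reduces from the same $2$-color \setsplit\ for every $k$, and handles the extra $k-2$ colors by appending $k-2$ blocks of auxiliary columns together with an expander-based row-duplication gadget (\Cref{lem:hardness-multicolor-l2}). The expander is what prevents a low-discrepancy coloring from ``mixing'' the auxiliary columns with the original ones. Your proposal to invoke gap Max-$k$-$k$-Set-Splitting with perfect completeness is not obviously available off the shelf for general $k$; what you call a ``largely routine step'' is in fact the step the paper works hardest to avoid.

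Second, and more seriously, your step (i) only delivers a single-row $\ell_\infty$ lower bound (``$\disc(\bA,k)\geq 1$ on at least one constraint row''), whereas Hadamard amplification via \Cref{lem:hadamard-discrepancy} converts an $\ell_2$ lower bound into an $\ell_\infty$ one. The paper's intermediate lemma (\Cref{lem:hardness-multicolor-l2}) is therefore stated with an $\ell_2$ NO case---$\max_s\|\bB(\tfrac{1}{k}\bone-\bone(\chi^{-1}(s)))\|_2>\eps\sqrt{n}$---and obtaining this $\ell_2$ bound is precisely why the expander enters. Your alternative, column-replication followed by majority-vote rounding, does not bridge this gap: if the adversarial $k$-coloring assigns the $N$ copies of a column non-uniformly, low $\ell_\infty$ discrepancy of the amplified matrix gives no control over the majority color, and the connection to \Cref{lem:hadamard-discrepancy} is lost because that lemma applies to a fixed vector $\bz$, not to a rounding procedure. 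The paper's amplification is instead the matrix product $\bA=\tfrac{1}{D}\bW\bB$ (same columns, so the same coloring works for both $\bA$ and $\bB$), which makes the $\ell_2\to\ell_\infty$ passage a one-line computation.

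In short: the missing idea is that the constant-gap step must already produce an $\ell_2$ lower bound over $\Omega(n)$ rows, and the paper achieves this via an expander gadget layered on top of $2$-color \setsplit\ rather than by appealing to a $k$-color CSP.
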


Our proof follows the framework of~\cite{CharikarNeNi11}: In the first step, we prove a hardness for when the NO case only guarantees that the discrepancy is at least constant, but with some structure on the given matrix. 
Then, in the second step, we compose the instance with a Hadamard matrix that amplifies the NO case to the desired $\Omega(\sqrt{n})$ bound.

\subsubsection*{Step I: Constant Factor Hardness}

In this step, we will prove the following hardness.

\begin{lemma} \label{lem:hardness-multicolor-l2}
For any $k \in \N \setminus \{1\}$, there exist $D \in \N$ and $\eps > 0$ such that the following holds: It is NP-hard, given a matrix $\bB \in [0, 1]^{n \times m}$ where each column of $\bB$ has $\ell_1$-norm at most $D$ and each row of $\bB$ has $\ell_1$-norm at most $k$, to distinguish between the following two cases:
\begin{itemize}
\item (YES) $\disc(\bB, k) = 0$;
\item (NO) For any coloring $\chi: [m] \to [k]$, we have $\max_{s \in [k]} \left\|\bB\left(\frac{1}{k} \cdot \bone - \bone(\chi^{-1}(s))\right)\right\|_2 > \eps \sqrt{n}$.
\end{itemize}
\end{lemma}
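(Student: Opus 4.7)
The plan is to reduce to this problem from a constant-factor gap version of balanced hypergraph $k$-coloring, relying on the classical hardness of \setsplit{} as the base case and a suitable lift to general $k$. The matrix $\bB$ will encode the hyperedge-vertex incidence of the hypergraph, scaled so that the row $\ell_1$-norm is exactly $k$; this choice identifies a balanced $2$-per-color coloring of every hyperedge with the condition $\disc(\bB,k)=0$, and forces each imbalanced hyperedge to contribute a constant to the $\ell_2$ discrepancy.

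First, I would establish (or invoke) the following combinatorial ingredient: for every $k \ge 2$, there exist constants $D' \in \N$ and $\delta > 0$ such that it is NP-hard, given a $2k$-uniform hypergraph $H = ([m], \mathcal{S})$ of maximum vertex degree at most $D'$, to distinguish whether $H$ admits a $k$-coloring $\chi$ with $|S \cap \chi^{-1}(s)| = 2$ for every $S \in \mathcal{S}$ and every $s \in [k]$, versus whether every $k$-coloring leaves at least $\delta |\mathcal{S}|$ hyperedges on which some color appears a number of times different from $2$. For $k=2$ this is the gap-NP-hardness of \setsplit. For $k \ge 3$, I would obtain an analogous hardness either by a combinatorial gadget that lifts a \setsplit{} instance (for instance, by pairing colors and introducing auxiliary ``color-locking'' vertices that synchronize the pairings) or by a tailored Label Cover / PCP reduction with perfect completeness and exact-balance soundness.

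Second, given such a hypergraph, I construct $\bB \in [0,1]^{n \times m}$ by setting $\bB_{j, \ell} = 1/2$ when $\ell \in S_j$ and $0$ otherwise, where $S_1, \ldots, S_n$ enumerate the hyperedges. Each row then has $\ell_1$-norm $|S_j|/2 = k$ and each column has $\ell_1$-norm at most $D'/2$, so the bound $D := \lceil D'/2 \rceil$ suffices. In the YES case, the promised $\chi$ zeroes out every entry of $\bB(\tfrac{1}{k}\bone - \bone(\chi^{-1}(s)))$: the $j$-th entry equals $|S_j|/(2k) - |S_j \cap \chi^{-1}(s)|/2 = 1 - 1 = 0$, giving $\disc(\bB,k)=0$. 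In the NO case, for every $\chi$ at least $\delta n$ hyperedges are imbalanced, and for each such $S_j$ some color $s$ has $|S_j \cap \chi^{-1}(s)|$ an integer distinct from $2$, so the corresponding coordinate of $\bB(\tfrac{1}{k}\bone - \bone(\chi^{-1}(s)))$ has absolute value at least $1/2$. Averaging over $s \in [k]$ yields a color $s^*$ accumulating at least $\delta n/k$ such coordinates, whence
\[\left\|\bB\bigl(\tfrac{1}{k}\bone - \bone(\chi^{-1}(s^*))\bigr)\right\|_2 \;\ge\; \sqrt{\delta n/(4k)} \;=\; \Omega(\sqrt{n}),\]
which is the required $\eps\sqrt{n}$ bound for a small constant $\eps = \eps_k$ depending only on $k$ and $\delta$.

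The principal obstacle is the first step: for $k \ge 3$, exactly-balanced $k$-coloring of bounded-degree $2k$-uniform hypergraphs is not an off-the-shelf NP-hard problem, and the reduction must simultaneously preserve perfect completeness (no slack in the YES case), a constant soundness gap, and bounded vertex degree. Everything downstream---the matrix construction, the row/column $\ell_1$-norm bounds, and the $\ell_2$-gap amplification via pigeonhole over colors---is routine once this combinatorial hardness is in hand.
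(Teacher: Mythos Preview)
Your second step---encoding a balanced $k$-coloring hardness as a scaled incidence matrix and reading off the $\ell_2$ gap by pigeonhole over colors---is correct and would finish the proof cleanly once the first step is in hand. The gap is precisely where you flag it: for $k \ge 3$, the exact-balance $k$-coloring hardness of bounded-degree $2k$-uniform hypergraphs with perfect completeness is not in the literature, and your sketch of how to obtain it (``pairing colors with color-locking vertices'' or ``a tailored Label Cover / PCP reduction'') is not a proof. Simultaneously preserving perfect completeness, a constant soundness gap, and bounded degree is delicate, and absent a concrete construction this step \emph{is} the lemma; everything you call ``routine'' downstream really is routine.

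The paper avoids this intermediate problem entirely. It reduces directly from bounded-degree \setsplit{} (the $k=2$ case, which is known) and manufactures the extra $k-2$ colors inside the matrix itself: the instance is $\bB = [\bC \;\; \bD \;\; \cdots \;\; \bD]$, where the rows are indexed by the edges of a bipartite expander on the hyperedge index set $[N]$, $\bC$ encodes the set system (entries $1/2$), and each of the $k-2$ copies of $\bD$ has a single $1$ per row determined by the expander edge. In the YES case, colors $1,2$ split the original elements and colors $3,\dots,k$ each take one $\bD$-block wholesale, giving row sums exactly $1$ in every color. The NO case is where the expander does the work: if the color with the most original elements also grabbed $\gamma N$ columns from some $\bD$-block, the expander's edge-spreading property would force many rows to see both a nonzero $\bC$-contribution and a $\bD$-contribution, pushing those coordinates to at least $3/2$ and blowing up the $\ell_2$ norm. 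This rules out mixing, so that color behaves essentially like a $2$-coloring of $[M]$, and the \setsplit{} soundness applies. In short, the paper trades your unproven $k$-coloring hardness for a concrete expander argument layered on top of the known $2$-color hardness---which is exactly the kind of ``combinatorial gadget that lifts a \setsplit{} instance'' you gesture at, made precise.
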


Note that the condition in the NO case only ensures that $\left\|\bB\left(\frac{1}{k} \cdot \bone - \bone(\chi^{-1}(s))\right)\right\|_\infty \geq \Omega(1)$. However, we require the $\ell_2$ version instead of the $\ell_\infty$ version because the former will be used in the subsequent step when we compose our instance with a Hadamard matrix.

Following~\cite{CharikarNeNi11}, we reduce from the \setsplit\ problem.
In this problem, we are given subsets $S_1, \dots, S_N \subseteq [M]$, each of size exactly $4$, and the goal is to find a subset $T \subseteq [M]$ such that $|S_i \cap T| = 2$ for each subset $S_i$. 
Such a set $S_i$ is said to be \emph{split}; otherwise, the set is said to be \emph{unsplit}. The \setsplit\ problem is known to be hard to approximate:
\begin{lemma}[\citep{Guruswami03,CharikarGuWi05}] \label{lem:setsplit-hardness}
There exist $d \in \N$ and $\delta > 0$ such that it is NP-hard, given a \setsplit\ instance where each element appears in at most $d$ subsets, to distinguish between the following two cases:
\begin{itemize}
\item (YES) There exists $T \subseteq [M]$ that makes every subset split;
\item (NO) Any subset $T \subseteq [M]$ leaves more than $\delta N$ subsets unsplit.
\end{itemize}
\end{lemma}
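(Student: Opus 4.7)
The statement of Lemma~\ref{lem:setsplit-hardness} is quoted verbatim from~\cite{Guruswami03,CharikarGuWi05}, so the paper treats it as a black box rather than reproving it. Nevertheless, the route I would follow to establish it from scratch is a gap-preserving reduction from a bounded-occurrence PCP, carried out in three stages.

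First, I would take as starting point a problem already known to be NP-hard to approximate within a constant factor on bounded-degree instances---H\aa stad's hardness for \textsc{Max-E3-Lin-2} (or \textsc{Max-3-Sat}) composed with Trevisan-style expander-based degree reduction yields an instance $\phi$ with an absolute constant completeness/soundness gap $(c,s)$ in which every variable appears in at most $d_0 = O(1)$ constraints. Keeping the occurrences bounded from the outset is what eventually controls the degree $d$ in the target set-splitting instance.

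Next, I would replace each constraint $C$ of $\phi$ by a small, fixed collection of $4$-element sets over the literal variables and a constant number of fresh auxiliary elements introduced per constraint. The gadget should be designed so that, interpreting any candidate $T \subseteq [M]$ as a Boolean assignment via literal-membership, (i) if $T$ satisfies $C$ then every new set admits a $2$-$2$-split by an appropriate placement of the auxiliary elements into $T$, and (ii) if $T$ violates $C$ then a constant fraction of the new sets remains unsplit no matter how the auxiliary elements are placed. Summing across all $N$ sets, the $(c,s)$ gap of $\phi$ then converts into the gap ``every set split'' versus ``more than $\delta N$ sets unsplit'' for some $\delta > 0$. Because each variable of $\phi$ occurs in $O(1)$ clauses and each clause contributes $O(1)$ new sets, every element of the resulting set-splitting instance lies in at most $d = O(1)$ subsets, delivering the bounded-degree guarantee.

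The hard part will be the gadget design. One needs a small $4$-uniform family whose simultaneous splittability \emph{exactly} characterizes satisfaction of the underlying clause and whose soundness is robust to the adversarial assignment of the auxiliary elements---otherwise the NO-side fraction $\delta$ collapses to zero. The soundness constant of the gadget ultimately determines $\delta$, and this is precisely the technical core of~\cite{Guruswami03,CharikarGuWi05}; for the qualitative statement of Lemma~\ref{lem:setsplit-hardness} it is enough to exhibit \emph{some} gadget with constant soundness, not to optimize its parameters.
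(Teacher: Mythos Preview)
You are correct that the paper does not prove Lemma~\ref{lem:setsplit-hardness}; it is invoked purely as a cited result from~\cite{Guruswami03,CharikarGuWi05} and used as a black box in the reduction that follows, so there is no ``paper's own proof'' to compare against. Your high-level sketch of how one would derive it---bounded-degree PCP hardness (H\aa stad plus expander-based sparsification) followed by a local gadget reduction into $4$-uniform set splitting---matches the standard methodology behind the cited works and is a reasonable outline, though of course the actual gadget construction and its soundness analysis are precisely what those references supply and what your sketch defers.
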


The \setsplit\ instance naturally corresponds to the ($2$-color) discrepancy framework. 
In particular, by viewing the set system $S_1, \dots, S_N \subseteq [M]$ as an indicator $\bB \in \{0, 1\}^{N \times M}$, one can verify that \Cref{lem:setsplit-hardness} immediately gives \Cref{lem:hardness-multicolor-l2} for the case $k = 2$; this is exactly the first step of~\cite{CharikarNeNi11}.

To extend this argument to the multi-color case, we have to add extra columns to the matrix so that these columns can be used for the other $k - 2$ colors.
The main challenge here is in the NO case: if we create the columns carelessly, it might be possible to ``mix'' these new columns with the old ones so that the discrepancy becomes small, even when there is no good solution to the \setsplit\ instance. 
To overcome this issue, we create multiple copies of the original rows and combine them with the new columns. 
Roughly speaking, this means that, if we mix many new columns and many old ones, then there will be a large amount of ``collision'' which makes the corresponding coordinate too large, resulting in a large discrepancy. 
To formalize this gadget, we will use a sufficiently strong expander graph; its properties that we need are summarized below.

\begin{lemma}[{e.g., \citep{Cohen16}}] \label{lem:expander}
For any $\gamma > 0$, there exists $d' \in \N$ and $\beta > 0$, and an algorithm that takes in $N \in \N$, runs in $\poly(N, d')$ time, and produces a $d'$-regular bipartite multigraph $G = (L, R, E)$ where $L = R = [N]$ and, for any subsets $U \subseteq L, V \subseteq R$ each of size at least $\gamma N$, the number of edges between $U, V$ is more than $\beta d' N$.
\end{lemma}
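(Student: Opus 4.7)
The plan is to derive \Cref{lem:expander} from any off-the-shelf construction of bipartite spectral expanders together with the bipartite expander mixing lemma. Fix $\gamma > 0$, and let $d'$ be a large constant to be chosen in terms of $\gamma$. I would first invoke a known construction to produce, in time $\poly(N, d')$, a $d'$-regular bipartite multigraph $G = (L, R, E)$ with $|L| = |R| = N$ whose biadjacency matrix $M \in \{0,1\}^{N \times N}$ has second singular value $\sigma_2(M) \leq C \sqrt{d'}$ for some absolute constant $C$. Explicit constructions such as the bipartite double cover of a Lubotzky--Phillips--Sarnak Ramanujan graph, or the zig-zag product of Reingold--Vadhan--Wigderson, work here; a random $d'$-regular bipartite multigraph also suffices by Friedman's theorem, with the failure probability handled by derandomization or by settling for a randomized algorithm.

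Next, I would apply the standard bipartite expander mixing lemma to $M$: for all $U \subseteq L$ and $V \subseteq R$,
\[
\left| e(U, V) - \frac{d' |U| |V|}{N} \right| \;\leq\; \sigma_2(M) \cdot \sqrt{|U| |V|}.
\]
The proof is the routine decomposition argument: split $\bone(U)$ and $\bone(V)$ into their components parallel to the top singular vector (proportional to $\bone$, with singular value exactly $d'$) and orthogonal components, and bound the surviving cross term by $\sigma_2(M)$ times the product of $\ell_2$ norms.

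Finally, specializing to $|U|, |V| \geq \gamma N$, one has $d' |U| |V| / N \geq \gamma^2 d' N$ and $\sqrt{|U||V|} \leq N$, so
\[
e(U, V) \;\geq\; \gamma^2 d' N - C \sqrt{d'} \cdot N.
\]
Choosing $d'$ large enough that $C \sqrt{d'} \leq \tfrac{1}{2} \gamma^2 d'$---equivalently, $d' \geq 4 C^2 / \gamma^4$---makes the right-hand side at least $\tfrac{1}{2} \gamma^2 d' N$, so setting $\beta := \gamma^2 / 2$ completes the proof. The only genuinely delicate ingredient is the existence of an efficient bipartite $d'$-regular graph with near-optimal spectral gap $O(\sqrt{d'})$; I would treat this as a black-box citation to \cite{Cohen16} and the Ramanujan-graph literature, since nothing in the rest of the paper requires a self-contained construction.
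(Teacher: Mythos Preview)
The paper does not prove \Cref{lem:expander}; it is stated as a known fact with a citation to \cite{Cohen16} and used as a black box in the proof of \Cref{lem:hardness-multicolor-l2}. Your derivation---take an explicit $d'$-regular bipartite graph with second singular value $O(\sqrt{d'})$, apply the bipartite expander mixing lemma, and choose $d'$ large relative to $\gamma$---is the standard way to unpack such a citation, and the arithmetic (with $\beta = \gamma^2/2$ and $d' \geq 4C^2/\gamma^4$) is correct.

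One small point worth flagging: the lemma requires the construction to work for \emph{every} $N \in \N$ with a fixed $d'$, and some of the constructions you list (notably LPS Ramanujan graphs) only exist for special $N$. This is presumably why the paper cites \cite{Cohen16} specifically, which produces bipartite Ramanujan multigraphs of every size and degree. That said, since you only need $\sigma_2 \leq C\sqrt{d'}$ rather than the sharp Ramanujan bound, more elementary constructions (zig-zag, or padding an LPS graph to arbitrary $N$ at the cost of a constant-factor loss in the spectral gap) would also suffice, so this is not a genuine gap in your argument.
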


\begin{proof}[Proof of \Cref{lem:hardness-multicolor-l2}]
Fix $k \in \N \setminus \{1\}$.
Let $S_1, \dots, S_N \subseteq [M]$ where each element appears in at most $d$ subsets be an instance of \setsplit\ from \Cref{lem:setsplit-hardness}. 
Assume without loss of generality that $\delta \le 8$.
We create a $d'$-regular multigraph $G = (L, R, E)$ as in \Cref{lem:expander} with $L = R = [N]$ and $\gamma = \frac{\delta}{8dk}$, and let the $i$th edge in $E$ be $e_i = (u_i, v_i)$.

We then construct the matrix $\bC \in [0, 1]^{|E| \times M}$ by
\begin{align*}
\bC_{i, j} = 
\begin{cases}
1/2 & \text{ if } j \in S_{u_i}; \\
0 & \text{ otherwise,}
\end{cases}
\end{align*}
and $\bD \in [0, 1]^{|E| \times N}$ by
\begin{align*}
\bD_{i, j} = 
\begin{cases}
1 & \text{ if } j = v_i; \\
0 & \text{ otherwise.}
\end{cases}
\end{align*}
Finally, let 
\[\bB = \begin{bmatrix}\bC &\bD &\cdots &\bD\end{bmatrix} \in [0, 1]^{|E| \times (M + (k - 2)N)}\]
where $\bD$ is repeated $k - 2$ times. (Note that we have $n = |E| = Nd'$ and $m = M + (k - 2)N$.) 
Since each vertex of $G$ has degree $d'$ and each element appears in at most $d$ subsets, each column of $\bB$ has $\ell_1$ norm at most $D := \max\{d'd/2, d'\}$, where the first term follows from the definition of $\bC$ and the second term from that of $\bD$. Furthermore, each row of $\bB$ has $\ell_1$ norm exactly $4 \cdot 1/2 + (k - 2) \cdot 1 = k$.

The reduction clearly runs in polynomial time. We will next prove its correctness with $\eps := \min\{\sqrt{\beta}/2, \sqrt{\delta/8}\}$, where $\delta$ is as in \Cref{lem:setsplit-hardness} and $\beta$ is as in \Cref{lem:expander}.

\paragraph{(YES)} Suppose that there exists $T \subseteq [M]$ that splits all subsets. 
We define the desired coloring by
\begin{itemize}
\item $\chi^{-1}(1) = T$;
\item $\chi^{-1}(2) = [M] \setminus T$;
\item $\chi^{-1}(\ell) = \{M + (\ell - 3)N + 1, \dots, M + (\ell - 2)N\}$ for all $\ell \in \{3, \dots, k\}$.
\end{itemize}

We claim that $\bB\left(\frac{1}{k} \cdot \bone - \bone(\chi^{-1}(s))\right) = \bzero$ for all $s \in [k]$; this claim implies that $\disc(\bB, k) = 0$. 
Observe that $\bB\frac{1}{k} \cdot \bone = \bone$ because each row of $\bB$ contains exactly four copies of $1/2$ and $(k - 2)$ copies of $1$. 
Hence, we are left to show that $\bB\bone(\chi^{-1}(s)) = \bone$ for all $s \in [k]$. 
To do this, let us consider three cases:
\begin{itemize}
\item Case I: $s = 1$. In this case, we have $\bB\bone(\chi^{-1}(s)) = \bC\bone(T)$. For each $i \in [|E|]$, we have
\begin{align*}
(\bC\bone(T))_i = \frac{1}{2} \cdot |S_{u_i} \cap T| = 1,
\end{align*}
where the second equality follows from the assumption that $T$ splits $S_{u_i}$. 
Hence, we have $\bB\bone(\chi^{-1}(s)) = \bone$.
\item Case II: $s = 2$. This case is similar to the Case~I since $[M] \setminus T$ also splits every subset.
\item Case III: $s \in \{3, \dots, k\}$. In this case, we have $\bB\bone(\chi^{-1}(s)) = \bD\bone = \bone$.
\end{itemize}

\paragraph{(NO)} Suppose contrapositively that there exists a coloring $\chi: [m] \to [k]$ such that $\left\|\bB\left(\frac{1}{k} \cdot \bone - \bone(\chi^{-1}(s))\right)\right\|_2 \leq \eps \sqrt{n}$ for all $s \in [k]$. We may assume without loss of generality that 
\[
|\chi^{-1}(1) \cap [M]| \geq \max\{|\chi^{-1}(2) \cap [M]|, \dots, |\chi^{-1}(k) \cap [M]|\}.
\]
For notational convenience, let $P = \chi^{-1}(1)$ and $T = P \cap [M]$. Note that our assumption implies that $|T| \geq M/k$.

 We will next bound the size of $|P \cap \{M + (\ell - 3)N + 1, \dots, M + (\ell - 2)N\}|$ for each $\ell \in \{3, \dots, k\}$.

\begin{claim} \label{claim:small-intersection}
For any $\ell \in \{3, \dots, k\}$, we have $|P \cap \{M + (\ell - 3)N + 1, \dots, M + (\ell - 2)N\}| < \gamma N$.
\end{claim}

\begin{proof}[Proof of Claim]
Fix any $\ell \in \{3, \dots, k\}$.
Let $U = \{u \in [N] \mid S_u \cap T \ne \emptyset\}$, and $V = \{v \in [N] \mid M + (\ell - 3)N + v \in P\}$. We have $$|U| \geq \frac{|T|}{4} \geq \frac{M}{4k} \geq \frac{N}{dk} \geq \gamma N,$$
where the third inequality holds since each element of $[M]$ appears in at most $d$ subsets $S_i$ and each of these subsets contains exactly four elements.

Notice that, for every edge $(u_i, v_i) \in U \times V$, it holds that \begin{align*}
(\bB\bone(P))_i 
&\geq \frac{1}{2} \cdot |T \cap S_{u_i}| + (\bone(P))_{M + (\ell - 3)N + v_i} \geq \frac{1}{2} + 1 = \frac{3}{2}.
\end{align*}
From the guarantee that $\left\|\bB\left(\frac{1}{k} \cdot \bone - \bone(P)\right)\right\|_2 \leq \eps \sqrt{n}$ and since $\bB\frac{1}{k} \cdot \bone =  \bone$, we can conclude that 
\begin{align*}
|(U \times V) \cap E| \leq \frac{\eps^2 n}{(1/2)^2} = 4\eps^2|E| = 4\eps^2 d' N \leq \beta d' N.
\end{align*}
From the above inequality, the fact that $|U| \geq \gamma N$, and the guarantee of \Cref{lem:expander}, we can conclude that $|V| < \gamma N$, as desired.
\end{proof}

The above claim implies that $|P \setminus T| < \gamma(k - 2)N$. Since $G$ is $d'$-regular (in particular, each $j\in[N]$ appears as $v_i$ exactly $d'$ times), this in turn yields
\begin{align} 
\|\bB(\bone(P) - \bone(T))\|_0 
&= \|\bB\bone(P\setminus T)\|_0 \nonumber \\
&\leq d' \gamma (k - 2)N = \gamma(k - 2)|E|.
\label{eq:difference}
\end{align}
To put it another way, $\bB\bone(P)$ and $\bB\bone(T)$ differ in at most $\gamma(k - 2)|E|$ coordinates. Recall that 
\begin{equation} \label{eq:B1P}
\eps \sqrt{n} \geq \left\|\bB\left(\frac{1}{k} \cdot \bone - \bone(P)\right)\right\|_2 = \|\bone - \bB\bone(P)\|_2.
\end{equation}
Putting these two together and recalling that $|E| = n$, we have
\begin{align*}
\|\bone - \bB\bone(T)\|_2^2 &= \sum_{i\in[|E|]} (1 - (\bB\bone(T))_i)^2 \\
&\overset{\eqref{eq:B1P}}{\leq} \eps^2 n + \sum_{i \in [|E|]} \left((1 - (\bB\bone(T))_i)^2 - (1 - (\bB\bone(P))_i)^2\right) \\
&\leq \eps^2 n + \sum_{i \in [|E|] \atop (\bB\bone(T))_i \ne (\bB\bone(P))_i} (1 - (\bB\bone(T))_i)^2 \\
&\leq \eps^2 n + \|\bB(\bone(P) - \bone(T))\|_0 \\
&\overset{\eqref{eq:difference}}{\leq} \eps^2 n + \gamma(k - 2)|E|,
\end{align*}
where the third inequality holds because $(\bB\bone(T))_i\in[0,2]$ for all $i\in[|E|]$.

Finally, notice that each subset $S_i$ unsplit by $T$ contributes at least $d'/4$ to $\|\bone - \bB\bone(T)\|_2^2$, because the set $S_i$ corresponds to $d'$ rows and each row has absolute value at least $1/2$. 
This means that the number of subsets unsplit by $T$ is at most
\begin{align*}
\frac{4}{d'}\left(\eps^2 n + \gamma(k - 2)|E|\right)
&= 4\left(\eps^2 + \gamma(k - 2)\right) N \\
&\leq 4\left(\frac{\delta}{8} + \frac{\delta}{8}\right) N = \delta N,
\end{align*}
where the inequality follows from our definitions of $\eps$ and $\delta$.
This concludes our proof of \Cref{lem:hardness-multicolor-l2}.
\end{proof}

\subsubsection*{Step II: Gap Amplification via Hadamard Matrix}

Next, we follow the proof of~\citet{CharikarNeNi11} by composing the hard instance in the previous subsection with a Hadamard matrix, which allows us to prove \Cref{lem:hardness-multicolor-disc}. In fact, our proof is slightly simpler than theirs, since they would like the resulting matrix $\bA$ to have Boolean entries, whereas we only need the entries to belong to $[0, 1]$.

\begin{proof}[Proof of \Cref{lem:hardness-multicolor-disc}]
Let $\bB \in [0, 1]^{n \times m}$ be the instance from~\Cref{lem:hardness-multicolor-l2} and $D, \eps$ be as in~\Cref{lem:hardness-multicolor-l2}.

We may assume that $n$ is a power of two; otherwise, we add all-zero rows to $\bB$ until $n$ becomes a power of two. Let $\bH \in \{-1, +1\}^{n\times n}$ be a Hadamard matrix of order $n$ and $\bW \in \{0, 1\}^{n\times n}$ be as in \Cref{lem:hadamard-discrepancy}. We simply let $\bA := \frac{1}{D} \cdot \bW\bB$. Since each column of $\bB$ has $\ell_1$-norm at most $D$ and $\bW$'s entries are $0$ or $1$, we have $\bA \in [0, 1]^{n \times m}$.

\paragraph{(YES)} If $\disc(\bB, k) = 0$, then there exists a coloring $\chi: [m] \to [k]$ such that, for all $s \in [k]$, 
\[\bB\left(\frac{1}{k} \cdot \bone - \bone(\chi^{-1}(s))\right) = \bzero.\]
This implies that $\bA\left(\frac{1}{k} \cdot \bone - \bone(\chi^{-1}(s))\right) = \bzero$, meaning that $\disc(\bA, k) = 0$.

\paragraph{(NO)} Suppose that for any coloring $\chi: [m] \to [k]$, it holds that \[\left\|\bB\left(\frac{1}{k} \cdot \bone - \bone(\chi^{-1}(s))\right)\right\|_2 > \eps \sqrt{n}\] for some $s \in [k]$. We have
\begin{align*}
&\left\|\bA\left(\frac{1}{k} \cdot \bone - \bone(\chi^{-1}(s))\right)\right\|_2^2 = \frac{1}{D^2}\left\|\bW\bB\left(\frac{1}{k} \cdot \bone - \bone(\chi^{-1}(s))\right)\right\|_2^2.
\end{align*}
Next, notice that since the first row of $\bB$ has $\ell_1$ norm at most $k$ and all coordinates of $\frac{1}{k} \cdot \bone - \bone(\chi^{-1}(s))$ have absolute value at most $1$, the first coordinate of $\bB\left(\frac{1}{k} \cdot \bone - \bone(\chi^{-1}(s))\right)$ has absolute value at most $k$. Using this fact and \Cref{lem:hadamard-discrepancy}, we can conclude that
\begin{align*}
\left\|\bA\left(\frac{1}{k} \cdot \bone - \bone(\chi^{-1}(s))\right)\right\|_2^2 
&\geq \frac{n}{D^2}\left(\left\|\bB\left(\frac{1}{k} \cdot \bone - \bone(\chi^{-1}(s))\right)\right\|_2^2 - k^2\right) \\
&> \frac{n}{D^2} \left(\eps^2 n - k^2\right),
\end{align*}
which is at least $(0.5\eps^2/D^2)n^2$ for any sufficiently large $n$.
This implies that \[\left\|\bA\left(\frac{1}{k} \cdot \bone - \bone(\chi^{-1}(s))\right)\right\|_{\infty} > \delta \sqrt{n}\] where $\delta = \sqrt{0.5\eps^2/D^2}$. Thus, we have $\disc(\bA, k) > \delta\sqrt{n}$.
\end{proof}

\subsection{From Multi-Color Discrepancy to Fair Division}
\label{sec:hardness-multicolor-to-fair-div}

We now prove \Cref{thm:hardness-fair-division} by reducing multi-color discrepancy to (almost) fair division. This reduction is similar to the ones we used earlier to establish our worst-case bounds.

\begin{proof}[Proof of \Cref{thm:hardness-fair-division}]
Let $\bA \in [0, 1]^{n \times m}$ be the instance from \Cref{lem:hardness-multicolor-disc} and $\delta$ be as in the lemma. We choose $\eps_k := \delta / (k - 1)$.

We construct one agent in each group corresponding to a row of $\bA$. Specifically, the utility of agent $a^{(i, j)}$ for good~$\ell$ is $u^{(i, j)}(\ell) = \bA_{j, \ell}$ for all $i \in [k]$, $j \in [n]$, and $\ell \in [m]$.

\paragraph{(YES)} Suppose that $\disc(\bA, k) = 0$, that is, there exists a coloring $\chi: [m] \to [k]$ such that, for all $s \in [k]$, $\bA\left(\frac{1}{k} \cdot \bone - \bone(\chi^{-1}(s))\right) = \bzero$. 
Let $(A_1, \dots, A_k)$ be the allocation where $A_s = \chi^{-1}(s)$ for all $s \in [k]$. 
One can check that this allocation is a consensus $1/k$-division for every agent.

\paragraph{(NO)} Suppose contrapositively that there exists an allocation $(A_1, \dots, A_k)$ that is proportional up to $\eps_k \sqrt{n}$ goods. 
Consider the coloring $\chi: [m] \to [k]$ defined by $\chi^{-1}(s) = A_s$ for all $s \in [k]$. 
Since the allocation is proportional up to $\eps_k \sqrt{n}$ goods for agent $a^{(i, j)}$, we have
\begin{align*}
\eps_k\sqrt{n} 
&\geq \frac{u^{(i, j)}(G)}{k} - u^{(i, j)}(A_i) = \left(\bA\left(\frac{1}{k} \cdot \bone - \bone(\chi^{-1}(i))\right)\right)_j.
\end{align*}
Summing this over all indices $i$ besides a fixed $i^*$, we get
\begin{align*}
\eps_k (k - 1) \sqrt{n} 
&\geq \left(\bA\left(\frac{k - 1}{k} \cdot \bone - \left(\bone - \bone(\chi^{-1}(i^*))\right)\right)\right)_j \\
&= -\left(\bA\left(\frac{1}{k} \cdot \bone - \bone(\chi^{-1}(i^*))\right)\right)_j.
\end{align*}
Combining the two inequalities (the first for $i=i^*$) yields
\begin{align*}
\eps_k(k - 1)\sqrt{n} \geq \left|\left(\bA\left(\frac{1}{k} \cdot \bone - \bone(\chi^{-1}(i^*))\right)\right)_j\right|.
\end{align*}
Since this holds for all $i^*\in [k]$ and $j \in [n]$, we can conclude that $\disc(\bA, k) \leq \eps_k(k - 1)\sqrt{n} = \delta\sqrt{n}$.
\end{proof}

\section{Conclusion and Future Work}

In this paper, we have studied the allocation of indivisible goods to groups of agents using the standard fairness notions of envy-freeness, proportionality, and consensus $1/k$-division.
We presented bounds on the optimal relaxations of these notions that can be guaranteed for agents with additive valuations; all of the bounds are asymptotically tight when the number of groups is constant.
Our results imply that relatively strong fairness guarantees can be provided for all agents even when agents in the same group, who share the same set of resources, have highly differing preferences.
Moreover, we showed that computing allocations that improve upon these bounds is NP-hard even in instances where such allocations are known to exist.

Besides closing the gaps left by our work, an interesting direction for future work is to consider agents with arbitrary monotonic utilities.
Indeed, the techniques from discrepancy theory that we used crucially rely on the additivity assumption; so does the result of \citet{Alon87} that established the existence of a consensus $1/k$-division for divisible goods. 
Even in the case of prime numbers $k$, where a consensus $1/k$-division can be guaranteed for non-additive utilities \citep{FilosratsikasHoSo21},\footnote{See Theorem~6.5 in their extended version. For $k=2$, the existence of a consensus halving with non-additive utilities was shown by \citet{SimmonsSu03}.} it is unclear whether such a division can be rounded into a discrete allocation with a loss that is bounded only in terms of $n$.
Beyond the setting of our paper, one could also consider allocating a mixture of indivisible and divisible resources \citep{BeiLiLi21,BeiLiLu21,BhaskarSrVa21} or allowing groups to have different entitlements which can correspond to the group sizes \citep{FarhadiGhHa19,BabaioffEzFe21,ChakrabortySeSu22,SuksompongTe22} as well.

\section*{Acknowledgments}

This work was partially supported by the Singapore Ministry of Education under grant number MOE-T2EP20221-0001 and by an NUS Start-up Grant.
We would like to thank Paul Goldberg, Alexandros Hollender, and Ayumi Igarashi for interesting discussions and the anonymous reviewers of the 30th International Joint Conference on Artificial Intelligence (IJCAI 2021) and Theoretical Computer Science for valuable comments.

\bibliographystyle{plainnat}
\bibliography{main}

\appendix

\section{Proof Sketch of Constructive Version of Multi-Color Discrepancy}
\label{app:discrepancy-algo}

Since the work of \citet{DoerrSr03} did not explicitly discuss the computational efficiency of their bound (\Cref{lem:disc-ub-multi} in our paper), we will roughly sketch the proof of the efficient version of their bound; more specifically, we will outline the proof of the following theorem. We stress that this is already implicitly known and we merely include the arguments here for completeness.

\begin{theorem} \label{thm:multi-color-efficient-ub}
Given any $\bA \in [0, 1]^{m \times n}$ and $k \in \N$, there exists a deterministic polynomial-time algorithm that computes a coloring $\chi: [m] \to [k]$ such that
\begin{align*}
\left\|\bA\left(\frac{1}{k} \cdot \bone - \bone(\chi^{-1}(s))\right)\right\|_{\infty} \leq O(\sqrt{n})
\end{align*}
for all $s \in [k]$.
\end{theorem}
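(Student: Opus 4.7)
The statement is the algorithmic counterpart of Lemma~\ref{lem:disc-ub-multi}, which itself is derived from the 2-color bound (Lemma~\ref{lem:disc-ub}) via the reduction of Corollary~3.5 in \citep{DoerrSr03}. Randomized polynomial-time 2-color discrepancy algorithms matching Lemma~\ref{lem:disc-ub} are already known \citep{Bansal10,LovettMe15}. The plan is therefore to make the Doerr--Srivastav reduction algorithmic, via a recursive halving of the color set that invokes a constructive 2-color routine at each step.

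Assume first that $k = 2^t$ is a power of two, and build a perfect binary tree of depth $t$ whose leaves are identified with the $k$ colors. At each internal node $v$, holding a pool $M_v \subseteq [m]$ of goods inherited from its parent, invoke the constructive 2-color algorithm of \citep{Bansal10,LovettMe15} on the submatrix $\bA[\,\cdot\,,M_v]$ to split $M_v$ into two halves, which are then passed to the left and right children. After $t$ rounds each good reaches exactly one leaf $s$, and we set $\chi(\ell) := s$. There are $k - 1$ invocations of the 2-color algorithm, each on a submatrix of $\bA$, so the overall algorithm runs in randomized polynomial time.

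For the analysis, fix a row $j \in [n]$ and a color $s \in [k]$, let $v_0,v_1,\ldots,v_t=s$ be the root-to-leaf path for color $s$, write $r_j(M) := \sum_{\ell \in M} \bA_{j,\ell}$ for the row-$j$ mass on a subset $M \subseteq [m]$, and set $e_i := r_j(M_{v_i}) - \tfrac{1}{2}\, r_j(M_{v_{i-1}})$ for the per-row error at the $i$-th split. The 2-color guarantee gives $|e_i| \le c\sqrt n$ for an absolute constant $c$ independent of the subproblem. Unrolling the recursion $r_j(M_{v_i}) = \tfrac{1}{2}\, r_j(M_{v_{i-1}}) + e_i$ produces
\begin{align*}
r_j\!\left(\chi^{-1}(s)\right) \;=\; \frac{r_j([m])}{k} \;+\; \sum_{i=1}^{t} \frac{e_i}{2^{\,t-i}},
\end{align*}
so the deviation from the target $r_j([m])/k$ is bounded by $c\sqrt n \cdot \sum_{i=1}^{t} 2^{-(t-i)} < 2c\sqrt n = O(\sqrt n)$. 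Taking the maximum over $j$ and $s$ yields the claimed $\ell_\infty$ bound. For general $k$ that is not a power of two, at each node the balanced split is replaced by a $p$-weighted 2-color split with $p := \lceil|S|/2\rceil/|S|$, where $S$ is the current color set at that node; the algorithms of \citep{Bansal10,LovettMe15} extend to weighted discrepancy with the same $O(\sqrt n)$ guarantee, and the geometric telescoping argument carries over verbatim.

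\textbf{Main obstacle.} The only genuinely technical point is the telescoping: the error $e_i$ introduced at depth $i$ enters the final bound with geometric weight $2^{-(t-i)}$, so the $t = \log_2 k$ levels of recursion contribute a convergent series and yield $O(\sqrt n)$ with no $k$-dependent factor. Everything else---the existence of a constructive 2-color (or weighted 2-color) algorithm and its applicability to arbitrary submatrices of $\bA$ in polynomial time---is off-the-shelf from \citep{Bansal10,LovettMe15}.
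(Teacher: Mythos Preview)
Your proposal is correct and follows the same recursive-halving strategy as the paper's proof sketch, which likewise invokes the constructive linear-discrepancy routine of \citet{LovettMe15} (equivalently, your weighted 2-color split with target $p = k_1/k$) at each internal node and appeals to Corollary~3.5 of \citet{DoerrSr03} for the geometric telescoping that yields the final $O(\sqrt n)$ bound with no dependence on $k$. The paper leaves the telescoping implicit by citing Doerr--Srivastav, whereas you spell it out explicitly, but the two arguments are identical in substance.
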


To derive this theorem, we first recall a result due to \citet{LevyRR17},\footnote{\citet{LovettMe15} earlier proved a similar result but their algorithm is randomized.} who gave an efficient algorithm for the case of two colors; in fact, their algorithm works for a more general notion of discrepancy called \emph{linear discrepancy}~\citep{LovaszSpVe86}, where there is a ``starting vector'' $\bw \in [0, 1]^n$ and the goal is to find a $2$-coloring mostly resembling this starting vector. When $n \geq m$, their algorithm implies the following.

\begin{theorem}[\citep{LevyRR17}]
Suppose that $n \geq m$. There exists a deterministic polynomial-time algorithm that, given any $\bA \in [0, 1]^{n \times m}$ and $\bw \in [0, 1]^m$, can find $\bx \in \{0, 1\}^m$ such that $\|\bA(\bw - \bx)\|_{\infty} \leq O(\sqrt{m \log(2n/m)})$.
\end{theorem}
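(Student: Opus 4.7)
The plan is to reduce the $k$-color discrepancy task to $\lceil \log_2 k \rceil$ iterated applications of the Lovett--Meka 2-color linear discrepancy algorithm via recursive halving, and then to telescope the resulting per-level errors so that they sum to $O(\sqrt n)$ rather than $O(\sqrt n \log k)$.

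I would first handle the case $k = 2^t$ for an integer $t \geq 1$ by building a complete binary tree of depth $t$. The root is associated with the full set $[m]$; at every internal node with associated class $S \subseteq [m]$, I invoke the stated Lovett--Meka algorithm on $\bA|_S \in [0,1]^{n \times |S|}$ with starting vector $\bw = \tfrac{1}{2}\bone$, producing a partition $S = S_0 \cup S_1$ satisfying
\[
\|\bA\bone(S_1) - \tfrac{1}{2}\bA\bone(S)\|_\infty \leq C\sqrt{|S|\log(2n/|S|)}
\]
for a universal $C$. Since the function $x \mapsto x\log(2n/x)$ is maximized at $x = 2n/e$ with value $2n/e$, the right-hand side is uniformly $O(\sqrt n)$ across every node of the tree, as long as $|S|\le n$ (which is automatic if $m\le n$). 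The edge case $m > n$ only arises at the root; there I would invoke a constructive Spencer-type bound (which follows from iterating Lovett--Meka's partial-coloring step, or equivalently from Bansal's algorithm) to obtain an $O(\sqrt n)$ 2-coloring of $\bA$ irrespective of the ratio between $m$ and $n$.

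Next, I would analyze the cumulative error via a telescoping induction along each root-to-leaf path. A straightforward induction on the level $j$ shows that the leaf corresponding to bit-pattern $b_1 b_2 \cdots b_t$ satisfies
\[
\bA\bone(A_{b_1\cdots b_t}) - \tfrac{1}{k}\bA\bone \;=\; \sum_{j=1}^{t} \tfrac{1}{2^{t-j}}\,\varepsilon_j,
\]
where $\varepsilon_j$ is the signed rounding error vector introduced at level $j$ of that path and $\|\varepsilon_j\|_\infty = O(\sqrt n)$ by the preceding step. Taking $\ell_\infty$ norms and using $\sum_{j=1}^{t} 2^{j-t} < 2$ gives $\|\bA\bone(A_s) - \tfrac{1}{k}\bA\bone\|_\infty = O(\sqrt n)$ for every leaf $s$, which is the required conclusion. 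For general $k$ that is not a power of $2$, I would replace the balanced complete tree by an unbalanced binary tree with exactly $k$ leaves and, at every internal node, run Lovett--Meka with the non-uniform starting vector $\bw = \tfrac{n_L}{n_L + n_R}\bone$, where $n_L$ and $n_R$ count the leaves in the two subtrees; the telescoping still produces constant-summed path weights, so the same $O(\sqrt n)$ bound drops out.

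The main obstacle I anticipate is the potential multiplicative blow-up of discrepancy through the $\lceil \log_2 k \rceil$ recursive levels: each level contributes an $O(\sqrt n)$ error, and a naive union bound over levels would lose an extra $\log k$ factor, missing the target bound. The key point is that the $2^{j-t}$ weights in the telescoping sum damp the earlier-level errors quickly enough that the total stays $O(\sqrt n)$ independent of $k$. A secondary technical point is gracefully handling the $m > n$ regime at the topmost level, which is outside the hypothesis of the quoted Lovett--Meka theorem; this is resolved by inserting one preliminary round based on a constructive Spencer bound before entering the recursion.
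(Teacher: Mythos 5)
Your proposal does not prove the stated theorem; it assumes it. The statement here is the Lovett--Meka result itself: a randomized polynomial-time algorithm that, for $\bA \in [0,1]^{n \times m}$ with $n \ge m$ and an arbitrary starting vector $\bw \in [0,1]^m$, produces a rounding $\bx \in \{0,1\}^m$ with $\|\bA(\bw - \bx)\|_\infty \le O(\sqrt{m\log(2n/m)})$. This is an external result that the paper quotes from \citet{LovettMe15} without proof. Your argument invokes exactly this algorithm as a black box at every node of your binary tree (together with a constructive Spencer-type bound and the reduction from $m > n$ to $m \le n$), so it is circular as a proof of the statement. What you have actually sketched is the downstream multi-color consequence---essentially \Cref{thm:multi-color-efficient-ub} via \Cref{cor:efficient-linear-disc} and the Doerr--Srivastav recursion, including the correct telescoping weights $2^{j-t}$ that avoid the $\log k$ loss---which is the content of the paper's appendix, not of the quoted theorem.

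A genuine proof of the statement would have to open up the Lovett--Meka machinery: a constructive partial-coloring step (their constrained Gaussian random walk ``on the edges'' of the cube, or alternatively Bansal's SDP-based rounding), started at the fractional point $\bw$ rather than at $\tfrac{1}{2}\cdot\bone$, which in each phase fixes a constant fraction of the remaining coordinates to $\{0,1\}$ while incurring discrepancy $O(\sqrt{m_t \log(2n/m_t)})$ on the $m_t$ still-fractional columns; because $m_t$ decreases geometrically, these increments sum to $O(\sqrt{m\log(2n/m)})$, and the hypothesis $n \ge m$ is what makes each per-phase bound valid. None of this appears in your writeup, so as it stands there is a genuine gap: the key algorithmic ingredient that the statement asserts is never constructed, only cited.
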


Now the above bound is not quite what we want, due to its requirement that $n \geq m$. However, it turns out that there is an efficient algorithm that can always reduce the case $m > n$ to $m \leq n$~\cite[Theorem 12.3.1]{AlonSp00} for which the bound $O(\sqrt{m \log(2n/m)})$ becomes at most $O(\sqrt{n})$. Combining these two, we get:
\begin{corollary} \label{cor:efficient-linear-disc}
There exists a deterministic polynomial-time algorithm that, given any $\bA \in [0, 1]^{n \times m}$ and $\bw \in [0, 1]^m$, can find $\bx \in \{0, 1\}^m$ such that $\|\bA(\bw - \bx)\|_{\infty} \leq O(\sqrt{n})$.
\end{corollary}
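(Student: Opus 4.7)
The plan is to combine the Lovett--Meka theorem (which requires $n \ge m$) with a linear programming reduction that handles the case $m > n$. The key observation is that when $m > n$, we may first replace $\bw$ with a ``mostly integral'' starting vector $\bw' \in [0,1]^m$ satisfying $\bA\bw' = \bA\bw$ and having at most $n$ fractional coordinates; only these fractional coordinates then need to be rounded, and Lovett--Meka becomes applicable to the (sub)matrix restricted to them.

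Concretely, I would first consider the polytope $P = \{\by \in [0, 1]^m : \bA\by = \bA\bw\}$, which is non-empty since $\bw \in P$. Using any standard polynomial-time LP solver, compute a vertex (basic feasible solution) $\bw'$ of $P$. At a vertex, the tight equality and box constraints must pin $\bw'$ down uniquely in $\R^m$; since the equality system $\bA\by = \bA\bw$ has rank at most $n$, at least $m - n$ of the box constraints $\by_i \in \{0,1\}$ must be tight at $\bw'$. Let $S \subseteq [m]$ denote the set of indices where $\bw'_i \in (0, 1)$; then $|S| \le n$. Let $\bA_S \in [0,1]^{n \times |S|}$ and $\bw'_S \in [0,1]^{|S|}$ denote the restrictions of $\bA$ and $\bw'$ to the columns/coordinates in $S$. (When $n \ge m$ one can simply take $\bw' = \bw$ and $S = [m]$.)

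Next, apply the Lovett--Meka theorem, which is valid because $n \geq |S|$, to $\bA_S$ and the starting vector $\bw'_S$, obtaining $\bx_S \in \{0, 1\}^{|S|}$ with
\begin{align*}
\|\bA_S(\bw'_S - \bx_S)\|_\infty \le O\bigl(\sqrt{|S| \log(2n/|S|)}\bigr) \le O(\sqrt{n}),
\end{align*}
where the last inequality uses the elementary calculus fact that $s \mapsto s\log(2n/s)$ is bounded by $2n/e$ on $s \in (0, n]$. Finally, define $\bx \in \{0,1\}^m$ by $\bx_i = \bw'_i$ for $i \notin S$ and $\bx_i = (\bx_S)_i$ for $i \in S$. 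Because $\bA\bw = \bA\bw'$ and $\bw'$ agrees with $\bx$ outside $S$, we get
\begin{align*}
\bA(\bw - \bx) = \bA(\bw - \bw') + \bA(\bw' - \bx) = \bA_S(\bw'_S - \bx_S),
\end{align*}
so $\|\bA(\bw - \bx)\|_\infty \le O(\sqrt{n})$, as required.

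There is essentially no main obstacle in this corollary beyond correctly assembling the two black boxes: the LP-vertex step is standard (any interior-point/simplex solver returns a basic feasible solution in polynomial time) and the Lovett--Meka invocation is by assumption randomized polynomial time. The only point requiring a brief argument is the sparsity bound $|S| \le n$ for the fractional support of a vertex, which follows from the rank-$n$ equality system plus the dimension count of box constraints at a vertex; the calculus estimate $|S|\log(2n/|S|) \le O(n)$ is then immediate, and identifying $\bA(\bw - \bx)$ with $\bA_S(\bw'_S - \bx_S)$ is a direct consequence of the construction.
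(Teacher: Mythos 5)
Your proposal is correct and follows essentially the same route as the paper: the paper also combines the Lovett--Meka algorithm for the $n \geq m$ case with a reduction of the $m > n$ case to at most $n$ ``floating'' columns (citing Theorem 12.3.1 of Alon and Spencer), and your LP-vertex argument---moving to a basic feasible solution of $\{\by \in [0,1]^m : \bA\by = \bA\bw\}$ with at most $n$ fractional coordinates---is exactly the constructive content of that cited reduction. The sparsity count, the bound $s\log(2n/s) \leq 2n/e$, and the identification $\bA(\bw-\bx) = \bA_S(\bw'_S - \bx_S)$ are all correct as you state them.
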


We can now give a proof sketch of \Cref{thm:multi-color-efficient-ub} via an algorithm based on the work of \cite{DoerrSr03}.

\begin{proof}[Proof sketch of \Cref{thm:multi-color-efficient-ub}]
The algorithm is a recursive algorithm that works as follows:
\begin{itemize}
\item If $k = 1$, then halt and output the only possible coloring. 
\item Let $k_0 = \lfloor k / 2 \rfloor$ and $k_1 = k - k_0$, and let $\bw = (k_1/k) \cdot \bone$.
\item Use the algorithm from \Cref{cor:efficient-linear-disc} to find $\bx$ for $\bw$ as above (and the input $\bA$). 
\item For $i \in \{0, 1\}$:
\begin{itemize}
\item Let $\bA^{(i)}$ denote the submatrix of $\bA$ restricted to only columns $j$ with $\bx_j = i$.
\item Recursively run the algorithm on $\bA^{(i)}$ with $k_i$ colors to get a coloring $\chi_i$.
\end{itemize}
\item Output the coloring $\chi$ that results from concatenating $\chi_0$ and $\chi_1$, where we use disjoint sets of colors for the two colorings.
\end{itemize}

It is clear that this algorithm runs in polynomial time. Furthermore, similarly to Corollary 3.5 of \citet{DoerrSr03}, it can be verified that the guarantee in \Cref{cor:efficient-linear-disc} implies that $\left\|\bA\left(\frac{1}{k} \cdot \bone - \bone(\chi^{-1}(s))\right)\right\|_{\infty} \leq O(\sqrt{n})$ for all $s \in [k]$.
\end{proof}

Finally, we note that our reduction in the proof of \Cref{thm:ub-main} is efficient, so we immediately get the claimed polynomial-time algorithm for consensus $1/k$-division up to $O(\sqrt{n})$ goods. 
Furthermore, the reductions in \Cref{prop:fairness-relation} then yield polynomial-time algorithms for envy-freeness up to $O(\sqrt{n})$ goods and proportionality up to $O(\sqrt{n})$ goods.

\end{document}